\documentclass[aps,prd,reprint,groupedaddress]{revtex4-2}

\usepackage[utf8]{inputenc}
\usepackage{physics}
\usepackage{amsmath,amsthm,amssymb}
\usepackage{mathrsfs}
\usepackage{graphicx}
\usepackage{dcolumn}
\usepackage{bm}
\usepackage{siunitx}
\usepackage{caption}
\usepackage{subcaption}
\usepackage[hidelinks]{hyperref}

\newtheorem{theorem}{Theorem}
\newtheorem{lemma}[theorem]{Lemma}
\newtheorem{proposition}[theorem]{Proposition}
\newtheorem{corollary}[theorem]{Corollary}

\newcommand{\ee}{\mathrm{e}}

\begin{document}
	\title{Black Holes With Complete Asymptotic Throats}
	
	\author{Yi-Bo Liang}
	\email[]{liangyibo@stu.xjtu.edu.cn}
	\author{Hong-Rong Li}
	\email[Corresponding author: ]{hrli@xjtu.edu.cn}
	\affiliation{School of Physics, Xi’an Jiaotong University, Xi’an 710049, China}
	
	\date{\today}
	\begin{abstract}
		We investigate whether a spherically symmetric black hole can have a complete inner end at finite, nonzero areal radius while satisfying the standard energy conditions sufficiently far outside its horizon. Retaining two independent metric functions, we treat both null-type and spacelike-type asymptotic throats and derive a necessary and sufficient integral criterion for causal geodesic completeness. Under boundedness and finite-limit assumptions on the independent curvature components, null-type ends have limiting curvature fixed by the throat radius, whereas spacelike-type ends admit an additional nonnegative curvature parameter. We obtain criteria for bounded curvature in timelike and null parallel-propagated frames. Complete spacelike-type examples can have bounded scalar invariants and bounded timelike-frame curvature but unbounded null-frame curvature at infinite affine parameter. We also prove that the radial null energy condition must fail arbitrarily close to every complete inner end in the stated class. Nevertheless, explicit real-analytic constructions with either type of throat satisfy the null, weak, dominant, and strong energy conditions throughout a sufficiently distant exterior region. Smooth constructions can become exactly Schwarzschild beyond a finite radius. These results establish the compatibility of complete asymptotic throats with exterior energy conditions while distinguishing completeness from different levels of curvature regularity.
	\end{abstract}
	
	\maketitle
	\section{Introduction}
	The Schwarzschild interior ends \cite{Schwarzschild1916} at a curvature singularity \cite{penrose1965gravitational,hawking1967occurrence,hawking1970singularities} that infalling observers reach in finite proper time.
	An asymptotic throat \cite{LiangLi2026AT} offers a different possibility: the symmetry spheres shrink toward a nonzero limiting size, while the proper time or affine parameter along infalling geodesics can grow without bound.
	The interior then approaches an internal infinity.
	Understanding when this geometry is possible requires more than identifying a limiting radius.
	One must establish that causal geodesics are complete, determine the curvature they encounter, and examine whether the interior can coexist with an exterior satisfying the standard energy conditions.
	
	Several approaches to regular black holes replace the singular interior with a regular center \cite{bardeen1968non,ayon1998regular,ayon2000bardeen,bronnikov2001regular,burinskii2002new,fan2016construction,bronnikov2023regular} or a bounce \cite{simpson2019black,lobo2021novel,bronnikov2022black,bronnikov2022field,canate2022black}.
	The asymptotic-throat framework of Ref.~\cite{LiangLi2026AT} provides explicit examples with null-type and spacelike-type internal infinities.
	These examples motivate a systematic investigation of the conditions that make an asymptotic throat complete and control its curvature.
	They also raise the question of how much freedom remains once the inner end, the horizon, and the exterior are required to belong to a single smooth geometry.
	
	Several distinctions are essential to this investigation.
	The causal character of an inner boundary does not determine the proper time or affine parameter needed to approach it.
	Likewise, bounded scalar curvature invariants do not establish geodesic completeness.
	They do not even guarantee bounded curvature components in a frame transported along a freely falling trajectory, as illustrated by earlier studies of nonscalar curvature singularities \cite{Zaslavskii2007,Bronnikov2008}.
	A complete account of an asymptotic throat must therefore examine its causal character, geodesic completeness, and curvature regularity separately.
	
	In this work, we study spherically symmetric geometries with a static exterior, one simple horizon, and an areal radius that decreases monotonically toward a positive limit in the interior.
	We retain two independent metric functions. 
	This freedom is important both for the interior geometry and for the energy conditions of the effective stress tensor defined through the Einstein equation.
	We derive a necessary and sufficient integral criterion for causal geodesic completeness toward the inner end, applicable to both null-type and spacelike-type throats.
	A separate integral determines the causal character of that end.
	Together, these criteria identify precisely when an internal asymptotic region is also geodesically complete.
	
	The two types of throat admit different limiting curvature behavior.
	Under the stated boundedness and finite-limit assumptions on the independent curvature components, the limiting curvature of a complete null-type end is fixed by its limiting areal radius.
	A spacelike-type end allows an additional nonnegative curvature parameter. 
	We also derive conditions for bounded curvature in parallel-propagated frames along timelike and null geodesics.
	These conditions reveal a further distinction: complete spacelike-type examples can have bounded scalar invariants and bounded curvature in timelike frames, yet unbounded curvature in null frames at infinite affine parameter. 
	Thus geodesic completeness can coexist with a failure of this stronger curvature bound.
	
	The energy conditions connect these interior results to the global geometry.
	We prove that the radial null energy condition must fail arbitrarily close to a complete asymptotic throat under our assumptions.
	Nevertheless, the null, weak, dominant, and strong energy conditions can all hold throughout a sufficiently distant exterior region. 
	We establish this compatibility through explicit real-analytic constructions with either type of complete inner end. 
	We also construct smooth geometries that become exactly Schwarzschild beyond a finite radius.
	These constructions show that the required interior energy-condition violations do not force violations arbitrarily far into the exterior.
	
	Our results concern the geometry and its effective stress tensor; a matter model that realizes this source and an analysis of stability remain separate tasks. Section~II introduces the geometric framework.
	Sections~III and IV study null-type and spacelike-type interiors, respectively.
	Section~V derives the exterior energy inequalities and admissible asymptotic behavior.
	Sections~VI and VII give the real-analytic and smooth global constructions.
	Section~VIII discusses the implications and open questions.
	
	\section{Geometric framework}
	\label{II}
	
	We use the signature $(-,+,+,+)$ and units $G=c=1$.
	We define the effective stress tensor by $T^{\mathrm{eff}}_{ab}=G_{ab}/(8\pi)$.
	To simplify the energy inequalities, all densities and pressures below are rescaled by $8\pi$; thus, they are components of the Einstein tensor.
	This positive rescaling does not change any energy condition.
	
	The global metric of Ref.~\cite{LiangLi2026AT} is
	\begin{equation}
		ds^2=-A(\zeta)(\dd T^2-\dd X^2)+r^2(\zeta)\dd\Omega^2,
	\end{equation}
	where $\dd \Omega^2$ is the metric of the unit two-sphere and
	\begin{gather}
		\zeta=X^2-T^2 \in (\zeta_i,\infty),\quad \zeta_i\in[-\infty,0),\\
		A(\zeta)>0,\quad
		\frac{\mathrm{d}r}{\dd\zeta}>0,\quad
		r:(\zeta_i,\infty) \longrightarrow (L, \infty).
	\end{gather}
	We assume that $A$ and $r$ are smooth at every finite point of their domain.
	The inverse function $\zeta(r)$ is therefore smooth, with $\zeta^\prime>0$.
	A prime denotes a derivative with respect to $r$, unless a different variable is explicitly indicated.
	The inner end is of null-type when $\zeta_i=-\infty$ and of spacelike-type when $-\infty< \zeta_i <0$.
	
	We retain the Killing field and its normalization used in Ref.~\cite{LiangLi2026AT}:
	\begin{equation}
		\xi = \kappa (X\partial_T+T\partial_X),\quad \kappa>0.
	\end{equation}
	Its norm tends to $-1$ at the asymptotically flat end.
	The Killing horizon locates at $\zeta=0$, and regularity there gives the surface gravity $\kappa$.
	In the exterior, define $T= \sqrt{\zeta}\sinh(\kappa t)$ and $X = \sqrt{\zeta}\cosh(\kappa t)$, so $\xi=\partial_t$.
	The metric becomes
	\begin{equation}
		ds^2=-\kappa^2 A \zeta \dd t^2+\frac{A \zeta^{\prime 2}}{4\zeta}\dd r^2+r^2\dd\Omega^2.
	\end{equation}
	Define
	\begin{equation}
		D(r) = \frac{4\zeta}{A\zeta^{\prime 2}},\quad \ee^{\psi(r)} = \frac{\kappa A \zeta^\prime}{2}.\label{Dpsi}
	\end{equation}
	Then
	\begin{equation}
		ds^2=-D(r)\ee^{2\psi(r)}\dd t^2+D(r)^{-1}\dd r^2+r^2\dd\Omega^2.
	\end{equation}
	The same functions describe the interior, where $D<0$.
	Let $r_h=r(0)$.
	Since $A$ and $\zeta^\prime$ are finite and positive at the horizon,
	\begin{equation}
		D(r_h)=0,\quad D^\prime(r_h)>0,\quad \frac{1}{2}D^\prime(r_h) \ee^{\psi(r_h)}=\kappa.\label{DDprimepsi}
	\end{equation}
	Thus the horizon is a simple zero.
	The framework requires $0<L<r_h$.
	
	A smooth function $D$ with this single simple zero, negative for $L<r<r_h$ and positive for $r>r_h$, has the unique representation
	\begin{equation}
		D(r)=\left(1-\frac{r_h}{r}\right)\ee^{U(r)}.
	\end{equation}
	The horizon relation is consequently
	\begin{equation}
		U(r_h)+\psi(r_h)=\ln(2\kappa r_h).
	\end{equation}
	The more precise asymptotic conditions used for the exterior constructions will be stated in Sec.~V.
	
	\section{Null-type Interior}
	
	\subsection{Geodesic equations}
	
	In the interior, set $T=\sqrt{-\zeta}\cosh(\kappa t)$ and $X = \sqrt{-\zeta} \sinh(\kappa t)$.
	The relations in Eq.~\eqref{Dpsi} remain valid.
	For any fixed $r_0\in (L,r_h)$, they give
	\begin{equation}
		\frac{\zeta^\prime}{\zeta}=\frac{2\kappa}{D\ee^{\psi}}, \quad \ln \left|\frac{\zeta(r)}{\zeta(r_0)}\right| = - 2 \kappa \int_{r}^{r_0} \frac{\dd s}{D(s) \ee^{\psi(s)}},\label{lnzeta}
	\end{equation}
	Thus the inner end is of null-type, $\zeta(L)=-\infty$, if and only if
	\begin{equation}
		I_\star\equiv-\int_{L}^{r_0} \frac{\dd r}{D(r)\ee^{\psi(r)}}=\infty.\label{Istar}
	\end{equation}
	This condition specifies the causal character of the inner end in the conformal diagram.
	It does not by itself ensure an infinite affine parameter or proper time in the physical metric.
	
	\begin{theorem}[Internal causal completeness]
		All causal geodesics directed toward $r=L$ have infinite affine parameter or proper time if and only if
		\begin{equation}
			I_{geo}\equiv\int_{L}^{r_0} \frac{\dd r}{\sqrt{\ee^{-2\psi(r)}-D(r)}}=\infty.\label{Igeo}
		\end{equation}
	This criterion applies to both types of inner end.
	\end{theorem}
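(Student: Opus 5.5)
In the interior region $L<r<r_h$ I would work in the coordinates $(t,r,\theta,\phi)$ with the line element $ds^2=-D\ee^{2\psi}\dd t^2+D^{-1}\dd r^2+r^2\dd\Omega^2$. Here $D<0$, so $r$ is a time function and $t$ is spacelike. The Killing field $\partial_t$ and the spherical symmetry give conserved quantities for every causal geodesic with affine parameter $\lambda$:
\[
E=D\ee^{2\psi}\dot t,\qquad \ell=r^2\dot\phi,
\]
after rotating the orbit into the equatorial plane. The normalization $g(\dot x,\dot x)=-\epsilon$, with $\epsilon\in\{0,1\}$, then becomes
\[
\dot r^2=\ee^{-2\psi}E^2-D\left(\epsilon+\frac{\ell^2}{r^2}\right).
\]
Since $D<0$ in the interior, every term on the right is nonnegative. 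So $\dot r$ never vanishes and keeps a fixed sign, and a geodesic directed toward $r=L$ has $r$ strictly monotone. The affine length to the inner end is therefore
\[
\lambda=\int_L^{r_0}\frac{\dd r}{\sqrt{E^2\ee^{-2\psi}-D(\epsilon+\ell^2/r^2)}} .
\]
I would first check that the geodesic cannot escape to infinite $|t|$ or reach another boundary first. The interior $\{L<r<r_h\}$ is covered smoothly by $(T,X)$ at fixed $\zeta$, so if $\lambda$ is finite the geodesic does reach $r\to L$.

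The main step is to compare this integral with $I_{geo}$. Because $r\in[L,r_0]$ with $L>0$, the factor $\epsilon+\ell^2/r^2$ lies between constant positive bounds whenever $\epsilon+\ell^2>0$. Hence, for any geodesic with $E\neq0$ and $\epsilon+\ell^2>0$, the integrand is bounded above and below by constant multiples of $(\ee^{-2\psi}-D)^{-1/2}$. Its divergence is then equivalent to $I_{geo}=\infty$.

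The degenerate cases need separate treatment. If $E=0$ and $\epsilon+\ell^2>0$, the integrand is comparable to $(-D)^{-1/2}$. This dominates $(\ee^{-2\psi}-D)^{-1/2}$, so these geodesics are complete whenever $I_{geo}=\infty$. For the converse direction, the $E\neq0$ timelike geodesics already suffice. The radial null case $\epsilon=\ell=0$ gives $\lambda=|E|^{-1}\int \ee^{\psi}\dd r$, and $\ee^{\psi}=(\ee^{-2\psi})^{-1/2}\ge(\ee^{-2\psi}-D)^{-1/2}$, so it is complete whenever $I_{geo}=\infty$. The case $E=\ell=\epsilon=0$ is not a causal geodesic.

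Combining the cases: if $I_{geo}=\infty$, every causal geodesic has infinite affine parameter. If $I_{geo}<\infty$, the radial timelike geodesic with $E=1$ reaches $r=L$ in finite proper time, so completeness fails. Nothing in the argument uses whether $\zeta_i$ is finite, so the criterion covers both types of inner end.

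The step I expect to be the main obstacle is the global geometric justification rather than the integral comparison. One must confirm that the $(t,r)$ chart together with the conserved quantities describes every causal geodesic heading inward, including those crossing $T=0$, $X=0$, where the interior $t$ chart degenerates. One must also confirm that "directed toward $r=L$" exhausts the future-inextendible behaviours in the interior. I would handle this by using the $(T,X)$ form of the metric, where $\xi$ and $r$ are globally smooth and the first integrals hold everywhere, so that $\dot r^2$ is given by the same expression throughout the interior.
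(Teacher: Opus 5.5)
Your proposal is correct and follows essentially the same route as the paper: the conserved quantities $E$ (the paper's $\Pi$, up to sign) and $\ell$, the radial equation $\dot r^2=E^2\ee^{-2\psi}-D(\epsilon+\ell^2/r^2)$, a lower bound on the affine length by a constant multiple of $I_{geo}$ for sufficiency, and the radial timelike geodesic with $E=1$ for necessity. Your case split is an unpacked form of the paper's single bound $E^2\ee^{-2\psi}-D(\epsilon+\ell^2/r^2)\le C_{geo}(\ee^{-2\psi}-D)$ with $C_{geo}=\max\{1,E^2,\epsilon+\ell^2/L^2\}$; your chart worry is also milder than you fear, since the interior $t$-chart covers each interior wedge (for example $|X|<T$) smoothly, including $X=0$, the surface $T=0$ never meets the interior, and $r$ is monotone along inward causal geodesics, so they cannot leave the chart.
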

	
	\par\noindent
	\textit{Proof.}
	By spherical symmetry, each geodesic can be placed in the equatorial plane ($\theta=\pi/2$) \cite{Chandrasekhar1992}.
	Its Lagrangian is
	\begin{equation}
		2L=-D\ee^{2\psi} \left(\frac{\dd t}{\dd \lambda}\right)^2+D^{-1}\left(\frac{\dd r}{\dd \lambda}\right)^2+r^2\left(\frac{\dd \varphi}{\dd \lambda}\right)^2,\label{Lagrangian}
	\end{equation}
	where $2L=\epsilon$, $\epsilon=-1$ for timelike geodesics parametrized by proper time, and $\epsilon=0$ for affinely parametrized null geodesics.
	The conserved quantities are
	\begin{equation}
		\Pi \equiv -D \ee^{2\psi} \frac{\dd t}{\dd \lambda},\quad J \equiv  r^2\frac{\dd \varphi}{\dd \lambda}.
	\end{equation}
	The sign in the definition of $\Pi$ makes it the momentum conjugate to the spacelike coordinate $t$ in the interior.
	Substitution into Eq.~\eqref{Lagrangian} yields
	\begin{equation}
		\left(\frac{\dd r}{\dd \lambda}\right)^2=\Pi^2 \ee^{-2\psi} + D\left(\epsilon-\frac{J^2}{r^2}\right).
	\end{equation}
	For geodesics directed toward the innerend, the negative square root is taken.
	Their remaining parameter length is therefore
	\begin{equation}
		\Delta \lambda = \int_{L}^{r_0} \frac{\dd r}{\sqrt{\Pi^2 \ee^{-2\psi}+D\left(\epsilon-J^2/r^2\right)}}.\label{Deltalambda}
	\end{equation}
	Since $D<0$ and $r>L$, define
	\begin{equation}
		C_{geo} = \max \left\{1, \Pi^2, -\epsilon+\frac{J^2}{L^2}\right\}.
	\end{equation}
	Then
	\begin{equation}
		\Pi^2\ee^{-2\psi}+D\left(\epsilon-\frac{J^2}{r^2}\right)\leq C_{geo} (\ee^{-2\psi}-D),
	\end{equation}
	so that
	\begin{equation}
		\Delta\lambda \geq  C_{geo}^{-1/2} I_{geo}.
	\end{equation}
	This proves sufficiency.
	Conversely, a radial time like geodesic with $\Pi=1$ has $\Delta\lambda= I_{geo}$.
	Completeness of all inward causal geodesics includes this one and hence requires Eq.~\eqref{Igeo}.
	\hfill$\square$
	
	It is useful to introduce an interior time coordinate and scale factor,
	\begin{equation}
		\frac{\dd \tau}{\dd r} = -\frac{1}{\sqrt{-D}},\quad a = \sqrt{-D}\ee^\psi.\label{taur}
	\end{equation}
	The metric becomes
	\begin{equation}
		\dd s^2=-\dd\tau^2+a(\tau)^2 \dd t^2+r(\tau)^2\dd\Omega^2.\label{taut}
	\end{equation}
	Here and below, a dot denotes $\dd/\dd \tau$, whereas geodesic derivatives with respect to $\lambda$ or proper time are written explicitly.
	In particular, $\dot{r}=-\sqrt{-D}<0$ and $\Pi=a^2 \dd t/\dd \lambda$.
	The two integrals become
	\begin{equation}
		I_\star=\int_{\tau_{0}}^{\tau_L} \frac{\dd \tau}{a},\quad I_{geo}=\int_{\tau_{0}}^{\tau_L} \frac{\dd \tau}{\sqrt{1+a^{-2}}}.\label{Istargeo}
	\end{equation}
	Since the second integrand is at most one, causal completeness requires $\tau_L=\infty$.
	
	\subsection{Curvature invariants}
	Define the four curvature functions
	\begin{equation}
		\begin{gathered}
			k_1= \frac{D^{\prime\prime}}{2}+\frac{3}{2}D^\prime\psi^\prime+D(\psi^{\prime\prime}+\psi^{\prime 2}),\\
			k_2 = \frac{D^\prime}{2 r}+\frac{D \psi^\prime}{r},\quad k_3 = \frac{D^\prime}{2r},\quad k_4 = \frac{1-D}{r^2}.
		\end{gathered}
	\end{equation}
	They determine the independent orthonormal Riemann components.
	With the curvature convention used here, the Ricci scalar $R$, $\mathcal{S} = R_{ab}R^{ab}$, and the Kretschmann scalar $\mathcal{K} = R_{abcd}R^{abcd}$ are
	\begin{equation}
		\begin{gathered}
			R=-2k_1-4k_2-4k_3+2k_4,\\
			\mathcal{S}=(k_1+2k_2)^2+(k_1+2k_3)^2+2(k_4-k_2-k_3)^2,\\
			\mathcal{K}=4(k_1^2+2k_2^2+2k_3^2+k_4^2).
		\end{gathered}
	\end{equation}
	Thus bounded $\mathcal{K}$ is equivalent to bounded $k_i$ in this geometry, and also implies bounded $R$ and $\mathcal{S}$.
	In the interior variables,
	\begin{equation}
		\begin{gathered}
			k_1=-\frac{\ddot{a}}{a},\quad
			k_2=-\frac{\dot{a}\dot{r}}{a r},\\ k_3=-\frac{\ddot{r}}{r},\quad k_4=\frac{1+\dot{r}^2}{r^2}.
		\end{gathered}
	\end{equation}
	
	\begin{lemma}[Common consequences of completeness and bounded curvature]
		\label{curvature1}
		Suppose the interior is causally complete, $r\downarrow L>0$ and $\mathcal{K}$ is bounded near the inner end.
		For either causal type,
		\begin{equation}
			\begin{gathered}
				\dot{r}\to 0,\quad H_a\equiv\frac{\dot{a}}{a}\ \text{is bounded},\\ k_2\to0,\quad k_4\to \frac{1}{L^2},
			\end{gathered}\label{dotrH}
		\end{equation}
	\end{lemma}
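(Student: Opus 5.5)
Causal completeness together with Eq.~\eqref{Istargeo} gives $\tau_L=\infty$. Hence $r(\tau)$ is defined on $[\tau_0,\infty)$, is strictly decreasing (since $\dot r=-\sqrt{-D}<0$), and tends to $L>0$. Bounded $\mathcal{K}$ means every $k_i$ is bounded. In particular $k_3=-\ddot r/r$ is bounded, and since $r\in(L,r_0]$ this makes $\ddot r$ bounded on $[\tau_0,\infty)$.

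The first step is $\dot r\to0$, which is a Barbalat-type argument. The function $r$ converges as $\tau\to\infty$, so $\int_{\tau_0}^\infty \dot r\,\dd\tau=L-r(\tau_0)$ is finite. Because $\ddot r$ is bounded, $\dot r$ is uniformly Lipschitz. Now suppose $|\dot r(\tau_n)|\ge\varepsilon$ along a sequence $\tau_n\to\infty$. Then $|\dot r|\ge\varepsilon/2$ on intervals of fixed length $\varepsilon/(2\sup|\ddot r|)$ around each $\tau_n$. Since $\dot r<0$ throughout, these intervals carry no cancellation, so the integral of $|\dot r|$ would diverge. This contradicts the finiteness above, so $\dot r\to0$. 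The limit of $k_4$ then follows immediately: $k_4=(1+\dot r^2)/r^2\to 1/L^2$.

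The main obstacle is the boundedness of $H_a$. Boundedness of $k_2=-H_a\dot r/r$ does not give it directly, because $\dot r\to0$. Instead I would use $k_1=-\ddot a/a$. The identity $\dot H_a=\ddot a/a-H_a^2=-k_1-H_a^2$ yields the Riccati inequality $\dot H_a\le K-H_a^2$ with $K=\sup|k_1|$.
\begin{itemize}
\item For an upper bound: if $H_a>\sqrt{K}+1$ at some time, $H_a$ decreases there and cannot later rise above that level, so $H_a$ is bounded above on $[\tau_0,\infty)$.
\item For a lower bound: if $H_a(\tau_1)<-\sqrt{K}-1$, comparison with $\dot y=-(y^2-K)$ forces blow-up to $-\infty$ in finite $\tau$. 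But $a=\sqrt{-D}\ee^\psi$ is smooth and positive on the interior, which extends to $\tau=\infty$, so $H_a$ is finite at every finite $\tau$. This is a contradiction, so $H_a\ge-\sqrt{K}-1$ for all $\tau$.
\end{itemize}
Thus $H_a$ is bounded, and therefore $k_2=-H_a\dot r/r\to0$ since $\dot r\to0$ and $r\ge L$.

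None of these steps uses $I_\star$, so the argument applies to both causal types. The only place where completeness enters is the conclusion $\tau_L=\infty$ at the start, which both the Barbalat argument and the Riccati lower bound rely on.
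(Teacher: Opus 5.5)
Your proposal is correct and follows the paper's proof. Completeness gives $\tau_L=\infty$, and bounded $\ddot r$ together with $r\to L$ yields $\dot r\to0$; you use a Barbalat-type uniform-continuity argument where the paper uses a Taylor estimate followed by $h\to0$, which is a minor variant resting on the same two facts. Your Riccati comparison $\dot H_a\le K-H_a^2$, with finite-time blow-up excluded because $\tau_L=\infty$, bounds $H_a$ exactly as the paper does, and the limits $k_4\to L^{-2}$ and $k_2=-H_a\dot r/r\to0$ then follow identically.
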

	
	\noindent\textit{Proof.}
	Completeness gives $\tau_L = \infty$.
	Since $k_3$ and $r$ are bounded, there is a constant $C_r$ with $|\ddot{r}|\leq C_r$ near the end.
	For each fixed $h>0$, Taylor’s theorem gives
	\begin{equation}
		|\dot{r}(\tau)|\leq \frac{|r(\tau+h)-r(\tau)|}{h} + \frac{C_r h}{2}.
	\end{equation}
	First let $\tau \to \infty$ while holding $h$ fixed.
	Because both radius values tend to $L$, $\lim\sup_{\tau\to \infty}|\dot{r}(\tau)|\leq C_r h /2$.
	Then let $h\to 0$ to obtain $\dot{r} \to 0$ and hence $k_4\to L^{-2}$.
	
	Bounded $k_i$ allows us to choose $C>0$ such that $|k_1|=|\ddot{a}/a|\leq C$ on a final interval.
	Consequently,
	\begin{equation}
		\dot{H}_a=\frac{\ddot{a}}{a}-H_a^2\leq C-H_a^2.
	\end{equation}
	Comparison with $\dot{y}=C-y^2$ gives $H_a(\tau) \leq \max\{H_a(\tau_0), \sqrt{C}\}$.
	If $H_a(\tau_1)<-\sqrt{C}$, the comparison solution with the same initial value tends to $-\infty$ at a finite time.
	The inequality $H_a\leq y$ would then contradict the smoothness and positivity of $a$ at that finite time.
	Thus $H_a\geq -\sqrt{C}$ on the interval, and $H_a$ is bounded.
	It follows that $k_2=-H_a \dot{r} / r \to 0$.
	\hfill$\square$
	
	\begin{theorem}[Curvature limits at a null-type inner end]
		\label{curvature2}
		Suppose the null-type interior  is causally complete and $\mathcal{K}$ is bounded near $r=L>0$.
		If $k_1$ and $k_3$ have finite limits, then
		\begin{gather}
			k_1,k_2,k_3\to 0,\quad k_4\to \frac{1}{L^2},\label{k1234} \\ R\to \frac{2}{L^2},\ \mathcal{S}\to \frac{2}{L^4},\quad  \mathcal{K}\to\frac{4}{L^4}.\label{RSK}
		\end{gather} 
	\end{theorem}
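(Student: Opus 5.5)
From Lemma~\ref{curvature1} we already have $\dot r\to0$, bounded $H_a$, $k_2\to0$ and $k_4\to L^{-2}$, and we know $\tau_L=\infty$. It therefore remains to show that $k_3\to0$ and $k_1\to0$. The limits in Eq.~\eqref{RSK} then follow by substituting $k_1=k_2=k_3=0$ and $k_4=L^{-2}$ into the expressions for $R$, $\mathcal S$ and $\mathcal K$. For example, $R\to 2/L^2$, $\mathcal S\to 2(L^{-2})^2=2/L^4$ and $\mathcal K\to 4/L^4$.

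\emph{Step 1: $k_3\to0$.} Let $\ell_3=\lim k_3$. Since $k_3=-\ddot r/r$ and $r\to L>0$, we have $\ddot r\to -\ell_3 L$. If $\ell_3\neq0$, then $|\ddot r|\geq c>0$ with a fixed sign on a final interval. Integrating over an infinite $\tau$-range would make $|\dot r|$ grow linearly, which contradicts $\dot r\to0$. Hence $\ell_3=0$. This uses only causal completeness, so it holds for either causal type.

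\emph{Step 2: $k_1\to0$ via the null-type condition.} Let $\ell_1=\lim k_1$. Then $\ddot a/a=-k_1\to-\ell_1$, and $H_a$ satisfies $\dot H_a=-k_1-H_a^2\to -\ell_1-H_a^2$. First take $\ell_1>0$. In that case $\dot H_a\leq-\ell_1/2$ eventually, so $H_a\to-\infty$, which contradicts the boundedness of $H_a$. Next take $\ell_1<0$. A standard Riccati limit argument using bounded $H_a$ then gives $H_a\to\pm\sqrt{-\ell_1}$. Indeed, bounded $H_a$ with $\dot H_a\to -\ell_1-H_a^2$ forces $H_a^2\to-\ell_1$, since otherwise $\dot H_a$ stays away from zero on long intervals. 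By continuity $H_a$ has a definite sign eventually. If $H_a\to+\sqrt{-\ell_1}$, then $a$ grows exponentially. If $H_a\to-\sqrt{-\ell_1}$, then $a$ decays exponentially.

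This is where the null-type hypothesis enters. The null-type condition $I_\star=\int^\infty \dd\tau/a=\infty$ excludes exponential growth of $a$. In the decaying case, the completeness integrand in Eq.~\eqref{Istargeo} behaves like $a$, which decays exponentially. Then $I_{geo}=\int^\infty a/\sqrt{a^2+1}\,\dd\tau<\infty$, contradicting completeness. Hence $\ell_1=0$. The null condition and completeness thus together exclude both growth and decay of $a$.

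The main obstacle is the Riccati argument for $\ell_1<0$. It requires showing rigorously that $H_a$ converges to one of $\pm\sqrt{-\ell_1}$ rather than oscillating. I would handle this by noting that on any interval where $|H_a^2+\ell_1|>\epsilon$, the bound $|\dot H_a|>\epsilon/2$ holds eventually, while $H_a$ stays bounded. This limits the time $H_a$ can spend away from $\pm\sqrt{-\ell_1}$. Passing from one root to the other would require crossing a region where $\dot H_a$ has a fixed sign. That sign is positive near zero, so $H_a$ can move from $-\sqrt{-\ell_1}$ up to $+\sqrt{-\ell_1}$ only once and can never return. Therefore $H_a$ eventually has a definite sign, and $|H_a|\geq\sqrt{-\ell_1}/2$ eventually. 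This lower bound alone is enough to get exponential behavior of $a$, so the precise limit is not needed.
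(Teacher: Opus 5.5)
Your proof is correct. The $k_3$ step and the final substitution into $R$, $\mathcal{S}$, $\mathcal{K}$ coincide with the paper's, and so do the two closing contradictions ($I_\star<\infty$ for growth, $I_{geo}<\infty$ for decay). For $k_1$, however, you take a different route.

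You work with the first-order Riccati equation $\dot H_a=-k_1-H_a^2$ and argue on its phase line:
\begin{itemize}
\item For $\ell_1>0$ you use the bound on $H_a$ from Lemma~\ref{curvature1}.
\item For $\ell_1<0$ you use a barrier argument. Eventually $\dot H_a>0$ on $|H_a|\le\sqrt{-\ell_1}/2$, so these levels can only be crossed upward. This gives $|H_a|\ge\sqrt{-\ell_1}/2$ with a definite sign, hence exponential growth or decay of $a$.
\end{itemize}

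The paper instead stays with the second-order inequality for $a$:
\begin{itemize}
\item For a positive limit, $\ddot a\le -ca$ forces $a$ to reach zero in finite time. This does not invoke the bound on $H_a$.
\item For a negative limit, $\ddot a\ge ca$ makes $\dot a$ monotone. Either $\dot a\ge 0$ at some late time, and comparison with $\cosh$ gives exponential growth. Or $\dot a<0$ throughout, then $a\to0$, and integrating $\dot a\ddot a\le c\,a\dot a$ to infinity gives $\dot a\le-\sqrt{c}\,a$.
\end{itemize}

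Your version treats both signs of $\ell_1$ in one framework and makes the exponential rates explicit. The paper's version is more elementary and self-contained in this step.

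You do not need the intermediate claim $H_a^2\to-\ell_1$, and your justification of it ("stays away from zero on long intervals") is only heuristic. Your last paragraph's barrier argument already suffices, and for $\ell_1<0$ it does not even use the boundedness of $H_a$. The earlier sentence can simply be dropped.
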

	
	\par\noindent
	\textit{Proof.}
	Lemma~\ref{curvature1} already gives the limits of $k_2$ and $k_4$.
	If $k_3 \to l \neq 0$, then $\ddot{r} \to - L l \neq 0$.
	Integration would contradict $\dot{r} \to 0$, so $k_3 \to 0$.
	
	If the limit of $ k_1$ were positive, there would be a constant $c$ such that $\ddot{a} \leq - c a < 0$ for all sufficiently large $\tau$.
	If $\dot{a}$ ever became negative on this interval, it would remain bounded above by a negative constant and $a$ would reach zero in finite time.
	If instead $\dot{a} \geq  0$ throughout the interval, then $a(\tau) \geq a(T) > 0$ and $\ddot{a} \leq -c a(T)$, which forces $\dot{a}$ to become negative in finite time.
	Both cases contradict $a>0$.
	
	If the limit of $ k_1$ were negative, we would have $\ddot{a}\geq c a >0$ eventually.
	$\dot{a}$ is then strictly increasing.
	If $\dot{a}\geq 0$ at some sufficiently late time, comparison with $\ddot{y}=cy$ gives
	\begin{equation}
		a(\tau) \geq a(T) \cosh[\sqrt{c}(\tau-T)]\geq\frac{a(T)}{2}\ee^{\sqrt{c}(\tau-T)}.
	\end{equation}
	This would make $\int_{T}^{\infty} \dd \tau / a$ finite, contrary to the null-type condition.
	The remaining possibility is $\dot{a}<0$ throughout the final interval.
	The decreasing function $a$ has a nonnegative limit; a positive limit would give a positive lower bound for $\ddot{a}$ and force $\dot{a}$ to become positive.
	Thus $a\to 0$ as well.
	Multiplying $\ddot{a}\geq c a $ by the negative quantity $\dot{a}$ and integrating to infinity gives $\dot{a}^2\geq ca^2$, hence
	\begin{equation}
		\dot{a}\leq -\sqrt{c}a.
	\end{equation}
	Therefore $a$ decays at least exponentially.
	Since $1/\sqrt{1+a^{-2}} \leq a$, this would make $I_{geo}$ finite, again a contradiction.
	The limit of $k_1$ must be zero, and the scalar limits follow by substitution.
	\hfill$\square$
	\par
	
	The null-type example in \cite{LiangLi2026AT} has these limits.
	The theorem extends that conclusion beyond the particular example, but retains the explicit assumptions that $k_1$ and $k_3$ converge.
	Boundedness alone does not establish their convergence.
	
	\subsection{Energy conditions in the interior}
	The interior time direction is $\partial_\tau$.
	For the rescaled effective density and pressures, with the factor $8\pi$ absorbed as specified above,
	\begin{equation}
		\begin{gathered}
			\rho_\tau=k_4-2k_2,\quad p_t=2k_3-k_4,\\ p_\perp=k_1+k_2+k_3.
		\end{gathered}\label{rhoptpp}
	\end{equation}
	Here $p_t$ is the pressure along the spacelike $t$ direction, and $p_\perp$ is the pressure in the two angular directions.
	
	\begin{theorem}[Radial null energy condition violation]
		\label{Rnecv}
		Suppose $\dot{r}<0, r\downarrow L>0$ and the radial null geodesics approaching the inner end are complete.
		Then $\rho_\tau+p_t<0$ at points arbitrarily close to the end.
		Thus the radial null energy condition cannot hold on any final interior interval.
		The result applies to both causal types.
	\end{theorem}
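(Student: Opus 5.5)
The plan is to express $\rho_\tau+p_t$ as a total derivative of one monotone quantity, and then to show that a sign condition on a final interval would make the affine length of radial null geodesics finite. From Eq.~\eqref{rhoptpp},
\begin{equation*}
\rho_\tau+p_t=2(k_3-k_2)=\frac{2}{r}\left(-\ddot r+\frac{\dot a}{a}\dot r\right).
\end{equation*}
Since
\begin{equation*}
\frac{\dd}{\dd\tau}\left(\frac{\dot r}{a}\right)=\frac{\ddot r-H_a\dot r}{a},
\end{equation*}
this becomes
\begin{equation*}
\rho_\tau+p_t=-\frac{2a}{r}\,\frac{\dd}{\dd\tau}\left(\frac{\dot r}{a}\right).
\end{equation*}
Because $a>0$ and $r>0$, the radial null energy condition on an interval is equivalent to $\dot r/a$ being nonincreasing in $\tau$ there.

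I argue by contradiction. Suppose $\rho_\tau+p_t\ge 0$ on a final interval $[\tau_1,\tau_L)$. Then $\dot r/a\le \dot r(\tau_1)/a(\tau_1)\equiv -c$, with $c>0$ because $\dot r<0$. Hence $a\le -\dot r/c$ on the interval, and integrating gives
\begin{equation*}
\int_{\tau_1}^{\tau_L}a\,\dd\tau\le \frac{r(\tau_1)-L}{c}<\infty.
\end{equation*}
This step uses only $r\downarrow L$. It does not need $\tau_L=\infty$, Lemma~\ref{curvature1}, or any curvature bound, so it applies to both causal types.

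Next I identify this integral with the affine length of a radial null geodesic. Setting $\epsilon=0$ and $J=0$ in Eq.~\eqref{Deltalambda} gives
\begin{equation*}
\Delta\lambda=\Pi^{-1}\int \ee^{\psi}\,\dd r.
\end{equation*}
By Eq.~\eqref{taur}, $\dd r=-\sqrt{-D}\,\dd\tau$ and $\ee^{\psi}=a/\sqrt{-D}$, so $\ee^\psi\,|\dd r|=a\,\dd\tau$. Therefore
\begin{equation*}
\Delta\lambda=\Pi^{-1}\int_{\tau_1}^{\tau_L}a\,\dd\tau<\infty.
\end{equation*}
This contradicts the assumed completeness of the radial null geodesics. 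Hence no final interval satisfies $\rho_\tau+p_t\ge0$, which means $\rho_\tau+p_t<0$ at points arbitrarily close to the end.

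I expect no step to be a real obstacle. The only point needing care is the bookkeeping that converts the null affine parameter into $\int a\,\dd\tau$ with the correct signs, together with the observation that $\Pi>0$ for a future-directed radial null geodesic.
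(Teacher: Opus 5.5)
Your proposal is correct and follows the paper's proof: the paper also sets $z=\dot r/a$, shows $\dot z=-\frac{r}{2a}(\rho_\tau+p_t)$, bounds $\int a\,\dd\tau$ by $(r(T)-L)/|z(T)|$, and reaches a contradiction with radial null completeness through $\dd\tau/\dd\lambda=|\Pi|/a$. One minor correction: in the interior $t$ is spacelike, so $\Pi$ has no definite sign for future-directed radial null geodesics (both signs occur, cf.\ $\sigma=\operatorname{sgn}\Pi$); you should write $|\Pi|^{-1}$, using only $\Pi\neq0$, which holds because otherwise the null tangent would vanish.
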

	
	\noindent\textit{Proof.}
	Set $z = \dot{r}/a < 0$, then
	\begin{equation}
		\dot{z}=\frac{\ddot{r}-H_a \dot{r}}{a}=-\frac{r}{a}(k_3-k_2)=-\frac{r}{2a}(\rho_\tau+p_t).
	\end{equation} 
	If the radial null energy condition held for all $\tau \geq T$, we would have $z(\tau)\leq z(T)<0$, thus $\dot{r}(\tau)\leq z(T) a(\tau)$.
	It would follow that
	\begin{equation}
		r(T)-L\geq -z(T)\int_{T}^{\tau_L}a(\tau)\dd\tau.
	\end{equation} 
	For a radial null geodesic,  $\dd \tau/\dd \lambda = |\Pi|/a$ with $\Pi\neq 0$.
	Its completeness therefore requires $\int_{T}^{\tau_L} a \dd\tau\to\infty$, contradicting the finite left-hand side.
	\hfill$\square$
	
	\subsection{Curvature along causal geodesics}
	
	Bounded scalar invariants do not ingeneral implyboundedcurvature in a parallel propagated frame \cite{Zaslavskii2007,Bronnikov2008}.
	We now give criteria for the full Riemann tensor in such frames.
	All bounds below are along each fixed geodesic and for each fixed initial frame.
	Their constants may depend on that geodesic and frame; no common bound over arbitrarily boosted observers is asserted.
	Define
	\begin{equation}
		Q\equiv k_3 - k_2.
	\end{equation}
	
	\begin{theorem}[Timelike parallel propagated curvature]
		\label{Tppc}
		Suppose $r\downarrow L >0$ and the four functions $k_i$ are bounded near the inner end.
		Then the following statements are equivalent:
		
		\begin{enumerate}
		\item $Q/a^2$ is bounded near the end.
		
		\item Along every fixed timelike geodesic approaching the end, all Riemann components are bounded in a parallel propagated orthonormal tetrad whose time axis is the unit tangent.
		\end{enumerate}
		The result applies to both null-type and spacelike-type interiors.
	\end{theorem}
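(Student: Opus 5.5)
The plan is to compute the Riemann tensor once in the static orthonormal frame of Eq.~\eqref{taut}, and then transform it to a tetrad adapted to the geodesic. Take $e_0=\partial_\tau$, $e_1=a^{-1}\partial_t$, and $e_2,e_3$ the unit angular vectors. I will write the curvature as a sum of three pieces.
\begin{enumerate}
\item A piece proportional to $k_1$ built from the bivector $e^0\wedge e^1$.
\item A piece proportional to $k_4$ built from the area form of the spheres.
\item A mixed piece $R_{aAbB}=M_{ab}\,\delta_{AB}$, with $a,b\in\{0,1\}$ and $A,B\in\{2,3\}$.
\end{enumerate}
From $k_2=-\dot a\dot r/(ar)$ and $k_3=-\ddot r/r$, the mixed matrix has $M_{00}=k_3$ and $M_{11}=-k_2$, up to the sign convention. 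I will rewrite it as
\begin{equation}
M_{ab}=-k_2\,\eta_{ab}+Q\,e^0_a e^0_b ,
\end{equation}
which is why $Q=k_3-k_2$ appears. The $k_1$ and $k_4$ pieces and the $\eta_{ab}$ part of the mixed piece are invariant under boosts in the $(e_0,e_1)$ plane. Only the $Q\,e^0\otimes e^0$ term can grow under such boosts.

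Next I will reduce the choice of tetrad. Any two orthonormal tetrads sharing the time axis $u$ differ by an $SO(3)$ rotation, which has bounded entries. Hence boundedness of all components in the parallel-propagated tetrad is equivalent to boundedness in any convenient smooth orthonormal tetrad with $e_0'=u$, and parallel transport never has to be solved. For a timelike geodesic in the equatorial plane with constants $\Pi,J$, the tangent is
\begin{equation}
u=\gamma e_0+\frac{\Pi}{a}e_1+\frac{J}{r}e_2,\qquad \gamma^2=1+\frac{\Pi^2}{a^2}+\frac{J^2}{r^2}.
\end{equation}
I will build the adapted tetrad in two steps. First apply a boost in the $(e_0,e_1)$ plane with $\sinh\eta=\Pi/(a\sqrt{1+J^2/r^2})$. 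Then apply a boost in the $e_2$ direction, whose velocity is governed by $J/r$ and is therefore bounded since $r\ge L$. The second transformation has uniformly bounded entries, so it preserves boundedness in both directions. The whole question therefore reduces to the first boost.

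Under that boost the invariant pieces are unchanged, and the only changes come from the mixed term, since $(e^0\cdot e_0')^2=\cosh^2\eta$ and $(e^0\cdot e_0')(e^0\cdot e_1')=-\cosh\eta\sinh\eta$. The boosted mixed components become
\begin{equation}
R_{0'A0'B}=\bigl(k_3+\sinh^2\!\eta\,Q\bigr)\delta_{AB},\quad
R_{1'A1'B}=\bigl(-k_2+\sinh^2\!\eta\,Q\bigr)\delta_{AB},\quad
R_{0'A1'B}=\mp\sinh\eta\cosh\eta\,Q\,\delta_{AB}.
\end{equation}
The $k_i$ are bounded by hypothesis, and $\sinh^2\eta\simeq\Pi^2/a^2$ up to a bounded factor. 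So if $Q/a^2$ is bounded, every component is bounded, including the cross term, because $|\sinh\eta\cosh\eta\,Q|\le(\sinh^2\eta+\tfrac12)|Q|$. This proves that statement 1 implies statement 2.

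For the converse, take a radial timelike geodesic with $J=0$ and $\Pi\neq0$. Then $\sinh\eta=\Pi/a$ exactly, and boundedness of $R_{0'A0'B}$ forces $\Pi^2Q/a^2$, and hence $Q/a^2$, to be bounded along it. This geodesic reaches every $\tau$ near the end, so the bound holds near the end. The step I expect to need most care is the sign bookkeeping of $M_{ab}$ in the decomposition above, and checking that the $k_1$ piece is truly unaffected once the second, angular boost is added. The argument avoids this by applying the angular boost only after the radial one and using only that its entries are bounded. Nothing in the argument distinguishes null-type from spacelike-type ends, so the result covers both.
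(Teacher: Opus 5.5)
Your proposal is correct and follows essentially the same route as the paper: the same tetrad $e_0=\partial_\tau$, $e_1=a^{-1}\partial_t$; a radial boost with $\sinh\eta=W=\Pi/(a\sqrt{1+J^2/r^2})$ giving $k_3+W^2Q$, $-k_2+W^2Q$ and $\Gamma W Q$; a bounded angular boost to the rest frame; and the fact that any two orthonormal tetrads with time axis $u$ differ by a spatial orthogonal transformation. Your decomposition $M_{ab}=-k_2\eta_{ab}+Q\,e^0_ae^0_b$, which explains why only $Q$ is boosted, and your explicit radial-geodesic argument for necessity, which the paper only calls obvious, add clarity without changing the method.
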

	
	\noindent
	\textit{Proof.}
	Use the orthonormal tetrad
	\begin{equation}
		\begin{gathered}
			e_0=\partial_\tau,\quad e_1=a^{-1} \partial_t,\\ e_2=r^{-1}  \partial_\theta ,\quad e_3=(r\sin\theta)^{-1} \partial_\varphi .
		\end{gathered} 
	\end{equation} 
	Its independent nonzero Riemann components, apart from curvature symmetries, are
	\begin{equation}
		\begin{gathered}
			R_{0101}=k_1,\quad R_{0m0m}=k_3,\\ R_{1m1m}=-k_2,\quad R_{2323}=k_4, \quad m=2,3.\label{Re0123}
		\end{gathered} 
	\end{equation} 
	On the equatorial plane, a unit timelike tangent vector takes the form
	\begin{equation}
		\begin{gathered}
			u = \gamma e_0+ w e_1 + j e_3,\\
			w=\frac{\Pi}{a},\quad j=\frac{J}{r}, \quad \gamma=\sqrt{1+w^2+j^2}.
		\end{gathered}
	\end{equation}
	Here $\gamma$ is the Lorentz factor relative to the comving observer $e_0$.
	It may diverge when $a\to 0$ and $\Pi\neq 0$, but this alone does not imply divergent curvature.
	Set
	\begin{equation}
	\begin{gathered}
		b=\sqrt{1+j^2},\quad W=\frac{w}{b}, \\ \Gamma=\frac{\gamma}{b},\quad \Gamma^2-W^2=1,
	\end{gathered} 
	\end{equation} 
	and make a radial boost,
	\begin{equation}
		\begin{gathered}
			f_0=\Gamma e_0+W e_1,\quad f_1=W e_0+\Gamma e_1,\\ f_2=e_2,\quad f_3=e_3.
		\end{gathered}\label{f0123}
	\end{equation} 
	Applying this transformation to Eq.~\eqref{Re0123} gives
	\begin{equation}
		\begin{gathered}
			R^{(f)}_{0101}=k_1,\quad
			R^{(f)}_{0m0m}=k_3+W^2 Q,\\
			R^{(f)}_{1m1m}=-k_2+W^2 Q,\quad
			R^{(f)}_{0m1m}=\Gamma W Q,\\ R^{(f)}_{2323}=k_4, \quad m=2,3.
		\end{gathered}\label{Rf0123}
	\end{equation} 
	The remaining components follow from the Riemann symmetries or vanish. 
	Since
	\begin{equation}
		\begin{gathered}
			W^2 |Q| \leq \Pi^2 \frac{|Q|}{a^2},\\
			|\Gamma W Q |=|W| \sqrt{1+W^2} |Q|\leq (1+W^2)	|Q|,
		\end{gathered}
	\end{equation} 
	bounded $Q/a^2 $makes every component in Eq.~\eqref{Rf0123} bounded.
	
	The observer’s rest tetrad is obtained by a second boost
	\begin{equation}
		\begin{gathered}
			E_0=b f_0+j f_3=u,\quad E_1= f_1,\\ E_2=f_2,\quad E_3=jf_0+bf_3.
		\end{gathered} 
	\end{equation} 
	Because $r\geq L>0$, both $j$ and $b$ are bounded along the fixed geodesic.
	The transformation therefore preserves boundedness of all curvature components.
	Any other orthonormal tetrad with time axis $u^a$ differs only by a spatial orthogonal transformation, whose matrix entries have absolute value at most one.
	This includes the parallel-propagated rest tetrads and proves sufficiency.
	The necessity is obvious.
	\hfill$\square$
	
	\begin{theorem}[Null parallel-propagated curvature]
		\label{Nppc}
		Suppose $r\downarrow L >0$ and the four functions $k_i$ are bounded near the inner end.
		Then the following statements are equivalent:
		
		\begin{enumerate}
			\item Both $Q/a^2$ and $a^2 Q$ are bounded near the end.
			
			\item Along every fixed affinely parametrized null geodesic approaching the end, all Riemann components are bounded in a parallel propagated frame.
		\end{enumerate}
		The result applies to both null-type and spacelike-type interiors.
	\end{theorem}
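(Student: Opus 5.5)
The plan is to reduce everything to how the Riemann tensor behaves under radial boosts, and to use the fact that any two parallel-propagated frames along a fixed geodesic differ by a \emph{constant} Lorentz transformation. Boundedness of all components is therefore independent of which parallel frame is chosen. In the comoving tetrad of Eq.~\eqref{Re0123}, introduce the radial null vectors $n_\pm=(e_0\pm e_1)/\sqrt2$. From Eq.~\eqref{Re0123} one reads off
\begin{equation}
R(n_\pm,e_m,n_\pm,e_m)=\tfrac12 Q,\quad R(n_+,e_m,n_-,e_m)=\tfrac12(k_3+k_2),
\end{equation}
together with $R(n_+,n_-,n_+,n_-)=-k_1$ and $R_{2323}=k_4$. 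So the Riemann tensor splits into a part of boost weight zero, built from bounded $k_i$, and the two weight-$\pm2$ pieces, each multiplied by $Q$. Under a radial boost with parameter $\eta$, these pieces scale as $\ee^{\pm2\eta}Q$. The whole question is how large $\eta$ must be along a null geodesic, and in which direction.

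\emph{Radial null geodesics.} With $J=0$ the tangent is $k=(\Pi/a)\sqrt2\,n_+$ (or $n_-$ for the other sign). I will check that $e_2,e_3$ are parallel along it: $\nabla_{e_0}e_2=\nabla_{e_1}e_2=0$ because $e_2=r^{-1}\partial_\theta$ and $r$ depends on $\tau$ only. Then $l=(a/\Pi)n_-/\sqrt2$ is fixed by $k\cdot l=-1$ and orthogonality to $e_2,e_3$, so it is parallel as well, and $(k,l,e_2,e_3)$ is a parallel null frame. Its curvature components are
\begin{equation}
R(k,e_2,k,e_2)=\Pi^2 Q/a^2,\quad R(l,e_2,l,e_2)=a^2Q/(4\Pi^2),
\end{equation}
and all remaining components are combinations of the $k_i$. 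Hence boundedness in this frame, and so in every parallel frame along this geodesic, is equivalent to boundedness of both $Q/a^2$ and $a^2Q$. This already proves the direction (ii)$\Rightarrow$(i), since radial null geodesics reach the end whenever $\Pi\neq0$. It also proves (i)$\Rightarrow$(ii) for radial geodesics.

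\emph{Nonradial null geodesics.} Write $k=\gamma e_0+we_1+je_3$ with $\gamma^2=w^2+j^2$, $w=\Pi/a$, $j=J/r$. Mimicking Theorem~\ref{Tppc}, boost radially with $W=w/j$ and $\Gamma=\gamma/j$, so that $k=j(f_0+f_3)$. Here $j$ is bounded above and below because $L\le r\le r_0$. In this $f$-frame the components are those of Eq.~\eqref{Rf0123}, which are bounded exactly when $W^2Q$ and $\Gamma WQ$ are bounded. Since $W^2\sim \Pi^2 r^2/(J^2a^2)$, this is controlled by $Q/a^2$ together with $Q$ itself. Completing $k$ to a null frame with $l=(f_0-f_3)/(2j)$ and $f_1,f_2$ then gives a frame in which all components are bounded under (i).

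The main obstacle is the last step: showing that this explicitly built frame differs from a genuinely parallel-propagated one by a Lorentz transformation that stays bounded along the whole infinite geodesic. Any parallel frame containing $k$ is related to $(k,l,f_1,f_2)$ by a null rotation about $k$, combined with a spatial rotation of $f_1,f_2$. The parameters of that null rotation satisfy linear first-order ODEs in $\lambda$, driven by the connection coefficients of the $f$-frame along $k$; these involve $H_a$, $\dot r/r$ and $\dd W/\dd\lambda$. I would first try to show that these coefficients, multiplied by $\dd\tau/\dd\lambda=\gamma$, are integrable in $\lambda$, using $\dot r\to0$ and boundedness of $H_a$ from Lemma~\ref{curvature1}. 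If integrability fails, I would instead construct the second null vector from the conserved quantities: take $l$ as a combination of $n_\pm$ and $e_3$ with coefficients fixed by $\Pi$, $J$ and $r$, and verify directly that it is parallel up to a bounded null rotation. In that version, the weight $-2$ term, which carries $a^2Q$, is the one that must be checked to remain bounded when $a\to\infty$.
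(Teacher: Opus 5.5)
Your radial analysis, and with it the direction (ii)$\Rightarrow$(i), agrees with the paper. For $J\neq0$, your boosted null tetrad $(k,l_0,f_1,f_2)$ with $l_0=(f_0-\sigma f_3)/(2b)$ is exactly the paper's. Two minor points: take $b=|j|$ rather than $j$ so that $\Gamma>0$, and note $R(n_+,n_-,n_+,n_-)=+k_1$; neither matters for boundedness.

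The gap is the step you yourself flag. You never show that this tetrad differs from a parallel-propagated one by a transformation that stays bounded along the whole infinite geodesic, and the route you sketch would not establish it.

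\begin{itemize}
\item Lemma~\ref{curvature1} assumes causal completeness and bounded $\mathcal{K}$, neither of which is a hypothesis of this theorem.
\item $\dot r\to0$ is not the property that matters.
\item Estimating the driving terms $H_a$, $\dot r/r$, $\dd W/\dd\lambda$ separately actually fails. When $a\to0$ one has $W\to\infty$ and $\gamma\simeq b|W|$. The $H_a$ contribution alone then gives $\int W|H_a|\,\dd\lambda\simeq b^{-1}\int|H_a|\,\dd\tau$. This diverges, for example logarithmically when $a\sim\tau^{-1}$, which is the complete null-type power law with $p=4$, $q=1$.
\end{itemize}

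The missing idea is an exact cancellation. Use $\nabla_{e_1}e_0=H_ae_1$, $\nabla_{e_1}e_1=H_ae_0$, $\nabla_{e_3}e_0=H_re_3$ and $\nabla_{e_3}e_1=0$ on the equator, together with $\dd W/\dd\tau=W(H_r-H_a)$. All $H_a$ terms then drop out, and
\[ \nabla_kf_1=H_rW\,k,\qquad \nabla_kl_0=H_rW\,f_1,\qquad \nabla_kf_2=0. \]
A single null rotation about $k$ therefore produces a parallel tetrad: set $m_1=f_1+zk$ and $l=l_0+zf_1+\tfrac{z^2}{2}k$ with $\dd z/\dd\lambda=-H_rW$. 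Since $\dd\tau/\dd\lambda=\gamma$,
\[ \left|\frac{\dd z}{\dd\tau}\right|=\frac{|\dot r||\Pi|}{|J|\sqrt{\Pi^2+a^2j^2}}\le\frac{|\dot r|}{|J|}. \]
Monotonicity of $r$ then gives $|z|\le[r(\tau_0)-L]/|J|$ uniformly along the geodesic, with no completeness assumption. The relevant quantity is the finite total variation of $r$, not the decay of $\dot r$.

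This also shows your closing concern is misplaced. Along nonradial null geodesics only bounded $Q/a^2$ and bounded $k_i$ are needed. The combination $a^2Q$ enters solely through the radial geodesics.
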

	
	\noindent
	\textit{Proof.}
	First consider $J\neq 0$ and write the affine tangent vector as
	\begin{equation}
		\begin{gathered}
			k=\gamma e_0 + w e_1 +j e_3,\\
			w=\frac{\Pi}{a},\quad j=\frac{J}{r}, \quad \gamma=\sqrt{w^2+j^2}.
		\end{gathered}
	\end{equation}
	Set $b=|j|, W=w/b$ and $\Gamma = \gamma / b$.
	Again $\Gamma^2-W^2=1$, so Eqs.~\eqref{f0123} and~\eqref{Rf0123} hold.
	On a final interval $L<r\le r_0$, both $b$ and $b^{-1}$ are bounded.
	Thus bounded $Q/a^2$ gives bounded curvature in the normalized null tetrad
	\begin{equation}
		\begin{gathered}
			k=b (f_0+\sigma f_3),\quad l_0=\frac{1}{2 b}(f_0-\sigma f_3),\\
			f_1,\quad f_2, \quad \sigma=\operatorname{sgn} J,
		\end{gathered}
	\end{equation}
	where $g(k,l_0)=-1$ and $f_1, f_2$ are unit spacelike vectors.
	
	This tetrad need not be parallel propagated.
	Writin $H_r=\dot{r}/r$, direct differentiation gives
	\begin{equation}
		\nabla_k f_1=H_r W k,\quad \nabla_k l_0=H_r W f_1,\quad
		\nabla_k f_2=0.
	\end{equation} 
	Choose $z$ with $z(\tau_0)=0$ and
	\begin{equation}
		\frac{\dd z}{\dd \lambda}=-H_r W,
	\end{equation} 
	and define
	\begin{equation}
		l=l_0+z f_1 + \frac{z^2}{2} k,\quad m_1=f_1+z k.
	\end{equation} 
	Then $\{k, l, m_1, f_2\}$ is a normalized parallel propagated null tetrad.
	Since $\dd \tau /\dd \lambda= \gamma$,
	\begin{equation}
		\frac{\dd z}{\dd \tau}=- \frac{\dot{r} \Pi}{|J| \sqrt{\Pi^2+a^2j^2}},\quad \left|\frac{\dd z}{\dd \tau}\right| \le \frac{-\dot{r}}{|J|}.
	\end{equation} 
	Hence, $|z(\tau)|\le [r(\tau_0)-L]/|J|$.
	The null rotation has bounded coefficients and preserves bounded curvature.
	
	For a radial null geodesic, $J=0$ and $E=|\Pi|>0$.
	With $\sigma=\operatorname{sgn} \Pi$, the tetrad
	\begin{equation}
		k=\frac{E}{a} (e_0+\sigma e_1),\quad l=\frac{a}{2 E}(e_0-\sigma e_1),\quad e_2, \quad e_3
	\end{equation}
	is already parallel propagated. 
	Its independent nonzero curvature components are
	\begin{equation}
		\begin{gathered}
			R_{klkl}=k_1,\quad R_{kmkm}=\frac{E^2 Q}{a^2},\\
			R_{lmlm}=\frac{a^2 Q}{4 E^2},\quad
			R_{kmlm}=\frac{k_2+k_3}{2},\\
			R_{2323}=k_4, \quad m=2,3.
		\end{gathered}
	\end{equation} 
	These components are bounded if and only if
	\begin{equation}
		\frac{Q}{a^2}, \ a^2 Q
	\end{equation}
	both stated combinations are bounded.
	This proves sufficiency for radial geodesics and necessity for the claim about all null geodesics.
	Finally, two parallel propagated frames along the same geodesic differ by a constant matrix:  differentiating one frame expressed in the other gives zero derivatives for all transformation coefficients.
	Thus the conclusion is independent of the fixed initial parallel frame.
	\hfill$\square$
	
	An arbitrary null tetrad with the same tangent need not be related to a parallel frame by a bounded transformation.
	The null result therefore concerns parallelv propagated frames, rather than all pointwise choices of null tetrad.
	
	\subsection{Acontrolled power-law subclass}
	
	Let $\delta = r-L$.
	We write $u(r)=o_{rel,2}(1)$ when
	\begin{equation}
		u\to 0,\quad \delta u^\prime\to 0,\quad \delta^2 u^{\prime\prime}\to 0\quad(\delta\to0^+).
	\end{equation}
	This notation includes the derivative control needed in the curvature formulas.
	\begin{proposition}[Complete null-type power-law interiors]
		\label{lem:An exact subclass of interior}
		Suppose
		\begin{equation}
			D=-d_0 \delta^p[1+o_{rel,2}(1)],\quad \ee^\psi=A_0 \delta^{-q}[1+o_{rel,2}(1)],\label{Depsi}
		\end{equation}
		where $d_0,A_0>0$.
		The inner end is null-type and causally complete if and only if
		\begin{equation}
			p\geq 2,\quad 1\leq q\leq p-1.\label{nulltypepq}
		\end{equation}
		In this range, the curvature invariants are bounded.
	\end{proposition}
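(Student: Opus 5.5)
We will prove the proposition by translating each of the two integral criteria into a power-law comparison in $\delta=r-L$, and then estimate the curvature functions $k_i$ term by term. Throughout, write $D=-d_0\delta^p(1+\varepsilon_1)$ and $\ee^\psi=A_0\delta^{-q}(1+\varepsilon_2)$ with $\varepsilon_1,\varepsilon_2=o_{rel,2}(1)$. The first step is to record the derivative asymptotics that the $o_{rel,2}$ notation is designed to give. From
\begin{equation}
\psi=\ln A_0-q\ln\delta+\ln(1+\varepsilon_2),
\end{equation}
and the conditions $\delta\varepsilon_2'\to0$ and $\delta^2\varepsilon_2''\to0$, we obtain
\begin{equation}
\psi'=\frac{-q+o(1)}{\delta},\qquad \psi''=\frac{q+o(1)}{\delta^2}.
\end{equation}
In the same way,
\begin{equation}
D'=-d_0\delta^{p-1}[p+o(1)],\qquad D''=-d_0\delta^{p-2}[p(p-1)+o(1)].
\end{equation}
The case $p=0$ or $p=1$ needs a word: there the leading coefficients vanish, and the relative-error bound $\delta^{p-2}o(1)$ still holds.

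Next, the causal character. The integrand of $I_\star$ in Eq.~\eqref{Istar} is
\begin{equation}
-\frac{1}{D\ee^\psi}=\frac{\delta^{q-p}}{d_0A_0}\,[1+o(1)].
\end{equation}
By limit comparison with $\int_0\delta^{q-p}\,\dd\delta$, we get $I_\star=\infty$ if and only if $q-p\le-1$. For completeness, Theorem~1 reduces the question to $I_{geo}$. Its integrand is $(\ee^{-2\psi}-D)^{-1/2}$, where both terms are positive and
\begin{equation}
\ee^{-2\psi}-D=A_0^{-2}\delta^{2q}[1+o(1)]+d_0\delta^p[1+o(1)]\asymp\delta^{\min(2q,p)}.
\end{equation}
Hence the integrand is comparable to $\delta^{-\min(2q,p)/2}$, and $I_{geo}=\infty$ if and only if $\min(2q,p)\ge2$, that is, $p\ge2$ and $q\ge1$. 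The sum need not be dominated cleanly by one term, but two-sided comparability by $\max$ and sum bounds suffices. Combining the two requirements gives exactly Eq.~\eqref{nulltypepq}. The constraint $p\ge2$ is in fact implied by $1\le q\le p-1$, but it is stated for clarity.

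Finally, boundedness of the invariants. It suffices to bound $k_1,\dots,k_4$, since $\mathcal{K}$ controls $R$ and $\mathcal{S}$. The term $k_4=(1-D)/r^2$ is bounded because $r\ge L>0$. The term $k_3=D'/(2r)=O(\delta^{p-1})$ is bounded, indeed tends to zero. For $k_2$, $D\psi'/r=O(\delta^{p}\cdot\delta^{-1})$ is bounded. For $k_1$, each of $D''$, $D'\psi'$, $D\psi''$ and $D\psi'^2$ is $O(\delta^{p-2})$, hence bounded since $p\ge2$.

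The main obstacle is making the comparisons rigorous, especially in $k_1$, where second derivatives of the perturbations appear. This is exactly what the $\delta^2u''\to0$ clause controls, so I expect this step to reduce to careful bookkeeping of the $o(1)$ factors rather than to any genuine difficulty.
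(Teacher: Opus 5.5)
Your proof is correct and takes the paper's route: you use limit comparison of the $I_\star$ integrand with $\delta^{q-p}$, two-sided comparability of the $I_{geo}$ integrand with $\delta^{-\min(q,p/2)}$, and term-by-term $O(\delta^{p-2})$ bounds on the $k_i$. The paper goes slightly further by extracting the leading coefficient $k_1\simeq-\frac{d_0}{2}(p-2q)(p-q-1)\delta^{p-2}$, which vanishes in the borderline case $p=2$, $q=1$, so $k_1\to0$ throughout the range (consistent with Theorem~\ref{curvature2}); this is a refinement beyond the boundedness the proposition claims.
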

	
	\noindent\textit{Proof.}
	The integrand of $I_{\star}$ is asymptotic to a positive constant times $\delta^{q-p}$, so it is nonintegrable exactly when $q\leq p-1$.
	The integrand of $I_{geo}$ is comparable to $\delta^{-\min(q,p/2)}$, so completeness is equivalent to $q\geq 1$ and $p\geq 2$.
	Then, we have
	\begin{equation}
		\begin{gathered}
			k_1=-\frac{d_0}{2} (p-2q)(p-q-1)\delta^{p-2}+o(\delta^{p-2}),\\
			k_2,k_3=O(\delta^{p-1}),\quad k_4\to \frac{1}{L^2}.
		\end{gathered}\label{examplek1234}
	\end{equation}
	If $p>2$, then $k_1\to 0$.
	If $p=2$, the allowed range forces $q=1$, and the leading coefficient vanishes.
	Thus $k_1\to 0$ in this case as well.
	\hfill$\square$
	
	The limiting scalars agree with Theorem~\ref{curvature2}.
	Moreover,
	\begin{equation}
		\frac{Q}{a^2}=-\frac{q}{LA^2_0}\delta^{2q-1}[1+o(1)],
	\end{equation}
	and
	\begin{equation}
		a^2 Q=-\frac{d^2_0 A^2_0 q}{L}\delta^{2p-2q-1}[1+o(1)].\label{a2Q}
	\end{equation}
	Both expressions tend to zero throughout the range in Eq.~\eqref{nulltypepq}.
	Therefore all fixed timelike and null geodesics have bounded parallel propagated curvature in this subclass.
	The null-type example of Ref.~\cite{LiangLi2026AT} corresponds to $p=4$ and $q=3$.
	
	\section{Spacelike-type Interior}
	
	\subsection{Geodesic equations}
	
	For spacelike-type end, $\zeta_i\in(-\infty,0)$.
	Eq.~\eqref{lnzeta}, then gives $I_\star<\infty$.
	Thegeneral completeness criterion, Eq.~\eqref{Igeo}, still applies, but it has a simplere quivalent form in this case.
	
	\begin{corollary}[Complete spacelike-type interiors]
		The inner end is spacelike-type and causally complete if and only if 
		\begin{equation}
			I_\star<\infty,\quad \tau_L=\infty.
		\end{equation} 
	\end{corollary}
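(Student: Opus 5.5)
The corollary follows by combining Eq.~\eqref{lnzeta} with the $\tau$-form of the integrals in Eq.~\eqref{Istargeo}.

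First, by Eq.~\eqref{lnzeta}, $\zeta(L)=\lim_{r\downarrow L}\zeta(r)$ is finite exactly when $I_\star<\infty$. Since $\zeta<0$ in the interior, this limit then lies in $(-\infty,0)$. So spacelike type, $-\infty<\zeta_i<0$, is equivalent to $I_\star<\infty$. It remains to show that, given $I_\star<\infty$, causal completeness (the criterion $I_{geo}=\infty$ of the internal causal completeness theorem) is equivalent to $\tau_L=\infty$.

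The forward direction was already noted after Eq.~\eqref{Istargeo}. The integrand $1/\sqrt{1+a^{-2}}$ is at most one, so $I_{geo}\le \tau_L-\tau_0$, and $I_{geo}=\infty$ forces $\tau_L=\infty$.

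For the converse, assume $\tau_L=\infty$ and $I_\star=\int_{\tau_0}^{\infty}\dd\tau/a<\infty$. I would bound the defect between the two integrals:
\begin{equation}
1-\frac{1}{\sqrt{1+a^{-2}}}=\frac{\sqrt{1+a^{-2}}-1}{\sqrt{1+a^{-2}}}\le \sqrt{1+a^{-2}}-1\le a^{-1}.
\end{equation}
The last step uses $\sqrt{1+x^2}\le 1+x$ for $x\ge0$, and the first uses $\sqrt{1+a^{-2}}\ge1$. Hence
\begin{equation}
I_{geo}\ge \int_{\tau_0}^{\tau}\dd\tau' - \int_{\tau_0}^{\tau}\frac{\dd\tau'}{a}\ge (\tau-\tau_0)-I_\star ,
\end{equation}
which diverges as $\tau\to\infty$. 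Therefore $I_{geo}=\infty$, and the internal causal completeness theorem gives completeness of all inward causal geodesics.

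The only delicate point is justifying the change of variables from $r$ to $\tau$ in both integrals, including the identification of the upper limit $\tau_L$ with the end $r\downarrow L$. This holds because $\dd\tau/\dd r=-1/\sqrt{-D}<0$ is smooth and nonvanishing on $(L,r_0]$, so $\tau$ is a monotone diffeomorphism onto $[\tau_0,\tau_L)$. Given that, the argument above is purely elementary.
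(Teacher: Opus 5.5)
Your proof is correct and follows the paper's argument essentially step for step. Finiteness of $I_\star$ characterizes the spacelike type, the bound $1/\sqrt{1+a^{-2}}\le 1$ gives the forward direction, and the estimate $1/\sqrt{1+a^{-2}}\ge 1-a^{-1}$ (the paper derives it via $1/(1+a^{-1})$, you by bounding the defect) yields $I_{geo}\ge(\tau-\tau_0)-I_\star\to\infty$; your extra remarks that $\zeta_i<0$ and that $r\mapsto\tau$ is a monotone change of variables are correct details the paper leaves implicit.
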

	
	\par\noindent
	\textit{Proof.}
	Finiteness of $I_\star$ characterizes the spacelike-type.
	Completeness requires $\tau_L=\infty$ by Eq.~\eqref{Istargeo}.
	Conversely, for $a>0$,
	\begin{equation}
		\frac{1}{\sqrt{1+a^{-2}}}\geq \frac{1}{1+a^{-1}}\geq1-\frac{1}{a}.
	\end{equation} 
	Forevery finite $T>\tau_0$, this gives
	\begin{equation}
		\int_{\tau_0}^{T}\frac{\dd\tau}{\sqrt{1+a^{-2}}}\ge T-\tau_0-\int_{\tau_0}^{T}\frac{\dd\tau}{a}\ge T-\tau_0-I_{\star}.
	\end{equation} 
	Letting $T\to\infty$ proves $I_{geo}=\infty$.
	\hfill$\square$
	
	\subsection{Curvature limits and comparison with Nariai spacetime}
	
	\begin{theorem}[Curvature limits at a spacelike-type inner end]
		\label{curvature3}
		Suppose the spacelike-type interior is causally complete and $\mathcal{K}$ is bounded near $r=L>0$.
		If $k_1$ and $k_3$ have finite limits, then there is a constant $\beta\geq0$ such that
		\begin{equation}
			k_1\to-\beta,\quad k_2,k_3\to 0,\quad k_4\to \frac{1}{L^2},\label{sk1234}
		\end{equation} 
		and
		\begin{equation}
			\begin{gathered}
				R \to 2\left(\beta+\frac{1}{L^2}\right),\\
				\mathcal{S} \to 2\left(\beta^2+\frac{1}{L^4}\right),\quad
				\mathcal{K} \to 4\left(\beta^2+\frac{1}{L^4}\right).
			\end{gathered}\label{sRSK}
		\end{equation}
		In addition,
		\begin{equation}
			a\to\infty,\quad A\to\infty.\label{aA}
		\end{equation}
	The last conclusion requires bounded curvature and completeness, but does not require the existence of the limits of $k_1$ and $k_3$.
	\end{theorem}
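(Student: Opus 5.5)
Lemma~\ref{curvature1} applies to either causal type, so it already gives $\dot r\to0$, $k_2\to0$, $k_4\to L^{-2}$ and a bound $|H_a|\le C_H$ on a final interval. The proof of Theorem~\ref{curvature2} shows that $k_3\to0$: a nonzero limit would make $\ddot r$ tend to a nonzero constant, which contradicts $\dot r\to 0$. It remains to show that the limit of $k_1$ is nonpositive, to evaluate the scalars, and to prove Eq.~\eqref{aA} using only bounded curvature and completeness. I would prove $a\to\infty$ first, because it also controls the sign of $\lim k_1$.

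For $a\to\infty$ I use the spacelike-type condition $I_\star=\int^\infty \dd\tau/a<\infty$ together with $\tau_L=\infty$. Since $|H_a|\le C_H$, we have $|\ln a(\tau)-\ln a(s)|\le C_H|\tau-s|$. Hence on any interval $[\tau,\tau+1]$, $a$ is comparable to $a(\tau)$ within a factor $\ee^{C_H}$. This gives
\begin{equation}
\int_\tau^{\tau+1}\frac{\dd s}{a}\ge \frac{\ee^{-C_H}}{a(\tau)}.
\end{equation}
Convergence of $I_\star$ makes the left-hand side tend to zero, so $1/a(\tau)\to0$. This step is the main point: integrability of $1/a$ alone would only give $a\to\infty$ along a subsequence, and bounded $H_a$, which comes from bounded $k_1$ through Lemma~\ref{curvature1}, is exactly what upgrades this to a full limit. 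The limits of $k_1$ and $k_3$ are not used here. For $A$, I relate it to $a$ in the interior. From Eq.~\eqref{Dpsi}, $A=-D\zeta'^2/(4\zeta)\cdot(-4)\cdots$; more directly, $\ee^{\psi}=\kappa A\zeta'/2$ and $\zeta'/\zeta=2\kappa/(D\ee^\psi)$ give
\begin{equation}
A=\frac{2\ee^\psi}{\kappa\zeta'}=\frac{D\ee^{2\psi}}{\kappa^2\zeta}=\frac{a^2}{\kappa^2(-\zeta)}.
\end{equation}
For a spacelike-type end, $-\zeta\to-\zeta_i\in(0,\infty)$, so $A\to\infty$ follows from $a\to\infty$.

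Now assume $k_1\to -\beta$. To show $\beta\ge0$, suppose $\lim k_1>0$. Then $\ddot a\le -ca$ eventually, and the first case analysis in the proof of Theorem~\ref{curvature2} shows that $a$ reaches zero in finite time. That argument never used the causal type, so this is a contradiction. Alternatively, this case is incompatible with $a\to\infty$. Therefore $\beta\ge0$. Note that $\beta>0$ is allowed: the exponential growth of $a$ that ruled it out in the null-type case is now consistent with $I_\star<\infty$.

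Finally, substituting $k_1\to-\beta$, $k_2,k_3\to0$, $k_4\to L^{-2}$ into the formulas for the scalars gives $R\to 2\beta+2/L^2$, $\mathcal S\to\beta^2+\beta^2+2L^{-4}$ and $\mathcal K\to4(\beta^2+L^{-4})$, which is Eq.~\eqref{sRSK}.
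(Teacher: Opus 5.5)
Your proof is correct and follows the paper's argument essentially step for step. You use Lemma~\ref{curvature1} and the $k_3$ and positive-$k_1$ exclusions from Theorem~\ref{curvature2}. You then use the bounded-$H_a$ window estimate $1/a(\tau)\le \ee^{C_H}\int_\tau^{\tau+1}\dd s/a$ together with $I_\star<\infty$ (the paper allows a general window length $\Delta$), and finally $A=a^2/(\kappa^2|\zeta|)$. The aborted expression for $A$ before ``more directly'' is wrong as written and should be deleted.
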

	
	\noindent\textit{Proof.}
	Lemma~\ref{curvature1} gives Eq.~\eqref{dotrH} is satisfied.
	The argument in Theorem~\ref{curvature2} then gives $k_3\to 0$ and excludes a positive limit of $k_1$.
	Write its remaining possible limit as $-\beta$, with $\beta\ge 0$.
	Substitution gives Eqs.~\eqref{sk1234} and~\eqref{sRSK}.
	
	To prove Eq.~\eqref{aA}, let $|H_a|\le B$ on a final interval and fix $\Delta>0$.
	For $s\in [\tau,\tau+\Delta]$,
	\begin{equation}
		\ln\frac{a(s)}{a(\tau)}=\int_\tau^s H_a(v)\dd v \leq B\Delta.
	\end{equation}
	Consequently,
	\begin{equation}
		\frac{1}{a(\tau)}\leq \frac{\ee^{B\Delta}}{\Delta}\int^{\tau+\Delta}_\tau \frac{\dd s}{a(s)}\leq \frac{\ee^{B\Delta}}{\Delta}\int^{\infty}_\tau \frac{\dd s}{a(s)}\longrightarrow 0,
	\end{equation}
	as $\tau\to\infty$.
	The last step uses the convergence of $I_\star$.
	Therefore $a\to\infty$.
	Since $\zeta\to\zeta_i<0$ and
	\begin{equation}
		A=\frac{D\ee^{2\psi}}{\kappa^2\zeta}=\frac{a^2}{\kappa^2|\zeta|},
	\end{equation}
	we also have $A\to\infty$.
	\hfill$\square$
	
	The spacelike-type example in \cite{LiangLi2026AT}, has $\beta=(2M-L)/L^3$.
	For Nariai spacetime \cite{Nariai1999}, the direct product of two-dimensional de Sitter spacetime and a two-sphere with the same curvature scale $\Lambda>0$, the invariants are
	\begin{equation}
		R=4\Lambda,\quad \mathcal{S}=4\Lambda^2,\quad \mathcal{K}=8\Lambda^2,
	\end{equation}
	They agree with Eq.~\eqref{sRSK}, when
	\begin{equation}
		\Lambda=\frac{1}{L^2},\quad \beta=\frac{1}{L^2}.
	\end{equation}
	This is an equality of limiting scalar invariants.
	It does not by itself establish convergence of the full metric to Nariai spacetime.
	Under the hypotheses of Theorem~\ref{curvature2}, a null-type end has $\beta=0$ and cannot have these Nariai invariant values at finite positive $L$.
	
	\subsection{Energy conditions in the interior}
	The effective density and pressures are still given by Eq.~\eqref{rhoptpp}.
	Theorem~\ref{Rnecv} also applies: radial null completeness and $\dot{r}<0$ force radial null energy condition violation arbitrarily close to the spacelike-type end.
	
	\subsection{Curvature along causal geodesics}
	
	\begin{corollary}[Timelike parallel propagated curvature at a spacelike-type end]
		\label{Tppcs}
		Suppose the spacelike-type interior is causally complete, $r\downarrow L>0$, and the four functions $k_i$ are bounded near the end.
		Along every fixed timelike geodesic approaching the end, all Riemann components are bounded in a parallel propagated rest tetrad.
		No curvature-limit assumption is needed.
	\end{corollary}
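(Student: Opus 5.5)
The strategy is to reduce the corollary to Theorem~\ref{Tppc}. That theorem already shows that bounded $Q/a^2$ is equivalent to bounded parallel-propagated rest-frame curvature along every fixed timelike geodesic, given bounded $k_i$ and $r\downarrow L>0$. So it suffices to show that $Q/a^2=(k_3-k_2)/a^2$ stays bounded near a complete spacelike-type end. Since the $k_i$ are bounded by hypothesis, $|Q|\le |k_2|+|k_3|$ is bounded. The only remaining task is to show that $a^{-2}$ stays bounded near the end.

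For this I will invoke the last part of Theorem~\ref{curvature3}. Its statement says explicitly that the conclusion $a\to\infty$ needs only bounded curvature and completeness, not the existence of limits of $k_1$ and $k_3$. The argument chains as follows:
\begin{itemize}
\item Bounded $k_i$ is equivalent to bounded $\mathcal{K}$.
\item Lemma~\ref{curvature1} then gives bounded $H_a$ on a final interval.
\item Completeness gives $\tau_L=\infty$.
\item The spacelike type gives $I_\star=\int^\infty \dd\tau/a<\infty$.
\item The averaging estimate in the proof of Theorem~\ref{curvature3} then yields $1/a(\tau)\to 0$.
\end{itemize}
Hence $a^{-2}\to0$, so $Q/a^2\to 0$, which is in particular bounded. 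Theorem~\ref{Tppc} then delivers the conclusion. No curvature-limit assumption enters anywhere, matching the final sentence of the corollary.

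The only step requiring care is confirming that the proof of $a\to\infty$ in Theorem~\ref{curvature3} really avoids the limit hypotheses. It does: it uses only $|H_a|\le B$, which Lemma~\ref{curvature1} derives from bounded $\mathcal{K}$ and completeness alone, together with the convergence of $I_\star$. I would also note that the bound holds along each fixed geodesic with geodesic-dependent constants, since Theorem~\ref{Tppc} controls $W^2|Q|\le \Pi^2|Q|/a^2$ with the conserved $\Pi$ of that geodesic. Here the bound is even uniform in $\Pi$ up to the factor $\Pi^2$.
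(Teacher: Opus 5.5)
Your proposal is correct and follows the paper's own proof. The paper likewise notes that the argument for $a\to\infty$ in Theorem~\ref{curvature3} uses only the bound on $H_a$ from Lemma~\ref{curvature1} and the convergence of $I_\star$, so bounded $Q$ gives $Q/a^2\to 0$ and Theorem~\ref{Tppc} applies; your step-by-step check that no curvature-limit hypothesis enters simply spells out the paper's three-line argument.
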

	
	\noindent
	\textit{Proof.}
	The proof of Eq.~\eqref{aA} gives $a\to\infty$ without assuming that any $k_i$ converges.
	Since $Q$ is bounded, $Q/a^2\to 0$.
	The result follows from Theorem~\ref{Tppc}.
	\hfill$\square$
	
	\begin{corollary}[Null parallel-propagated curvature at a spacelike-type end]
		\label{Nppcs}
		Under the hypotheses of Corollary~\ref{Tppcs}, all Riemann components are bounded in a parallel-propagated frame along every fixed null geodesic approaching the end if and only if $a^2 Q$ is bounded near the end.
	\end{corollary}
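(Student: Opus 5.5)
The corollary follows by combining Theorem~\ref{Nppc} with the growth of the scale factor established in Theorem~\ref{curvature3}. Under the hypotheses of Corollary~\ref{Tppcs}, the interior is causally complete and spacelike-type, $r\downarrow L>0$, and the $k_i$ are bounded. Theorem~\ref{Nppc} then says that bounded parallel-propagated curvature along every fixed null geodesic is equivalent to the joint boundedness of $Q/a^2$ and $a^2Q$ near the end. The task therefore reduces to showing that the first of these two conditions holds automatically. Once it does, boundedness of $a^2Q$ alone is the correct criterion.

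To establish this, I will invoke the last part of Theorem~\ref{curvature3}. That argument uses only bounded curvature and completeness. Lemma~\ref{curvature1} gives bounded $H_a$, and convergence of $I_\star$ for a spacelike-type end then gives $a\to\infty$. No limit of $k_1$ or $k_3$ is needed, exactly as already noted in the proof of Corollary~\ref{Tppcs}. Since $Q=k_3-k_2$ is bounded, because each $k_i$ is, we obtain $Q/a^2\to 0$. In particular $Q/a^2$ is bounded on a final interval.

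The two directions now follow at once. If $a^2Q$ is bounded, both conditions in statement~1 of Theorem~\ref{Nppc} hold, so parallel-propagated curvature is bounded along every fixed null geodesic. Conversely, if that curvature is bounded along every fixed null geodesic, Theorem~\ref{Nppc} forces $a^2Q$ to be bounded. This direction can also be read off directly from the radial case, where $R_{lmlm}=a^2Q/(4E^2)$ with $E>0$ fixed.

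I do not expect any genuine obstacle. The only point needing care is confirming that the $a\to\infty$ step is independent of the curvature-limit assumptions, which Theorem~\ref{curvature3} states explicitly.
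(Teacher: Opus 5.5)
Your proposal is correct and matches the paper's proof. The paper likewise uses the $a\to\infty$ argument from Theorem~\ref{curvature3}, already invoked in Corollary~\ref{Tppcs}, together with bounded $Q$ to get $Q/a^2\to 0$, so that the two-condition criterion of Theorem~\ref{Nppc} reduces to boundedness of $a^2Q$.
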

	
	\noindent
	\textit{Proof.}
	Again $Q/a^2\to 0$.
	The equivalence therefore follows from Theorem~\ref{Nppc}.
	\hfill$\square$
	
	\subsection{A controlled power-law subclass}
	
	\begin{proposition}[Complete spacelike-type power-law interiors]
		\label{Cspi}
		Under the controlled expansions in Eq.~\eqref{Depsi}, the inner end is spacelike type and causally complete if and only if
		\begin{equation}
			p\geq 2,\ q> p-1.\label{qp}
		\end{equation}
		This range also gives bounded scalar curvature invariants.
	\end{proposition}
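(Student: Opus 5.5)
The proof will mirror Proposition~\ref{lem:An exact subclass of interior}, replacing the null-type condition by its complement and then using the Corollary on complete spacelike-type interiors.

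\emph{Causal type.} Under the expansions in Eq.~\eqref{Depsi}, $-1/(D\ee^{\psi})$ is asymptotic to $(d_0A_0)^{-1}\delta^{q-p}$. The relative $o_{rel,2}(1)$ corrections do not affect integrability, so $I_\star<\infty$ exactly when $q-p>-1$, that is, $q>p-1$. By the discussion around Eq.~\eqref{Istar}, this is precisely the spacelike-type condition.

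\emph{Completeness.} Here I will use the Corollary: given $I_\star<\infty$, completeness is equivalent to $\tau_L=\infty$. From Eq.~\eqref{taur}, $\tau_L-\tau_0=\int_L^{r_0}\dd r/\sqrt{-D}$, and the integrand is asymptotic to $d_0^{-1/2}\delta^{-p/2}$. This integral diverges iff $p\geq 2$. As a check, the general criterion in Eq.~\eqref{Igeo} gives the same answer. Its integrand is comparable to $\delta^{-\min(q,p/2)}$, and $q>p-1\geq p/2$ whenever $p\geq2$, so the minimum is $p/2$. If instead $p<2$, we have $\tau_L<\infty$ and the end is incomplete. This proves both directions of the equivalence with Eq.~\eqref{qp}.

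\emph{Bounded invariants.} I will reuse the asymptotic expressions in Eq.~\eqref{examplek1234}. These come from substituting the controlled expansions into the formulas for $k_i$ and keeping leading orders; they did not use the null-type range. The $o_{rel,2}$ control of first and second derivatives ensures the error terms are of relative order $o(1)$.

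The terms are then estimated as follows:
\begin{itemize}
\item $k_2$ and $k_3$ are $O(\delta^{p-1})$, because $D'\sim\delta^{p-1}$ and $D\psi'\sim\delta^{p}\cdot\delta^{-1}$. Both tend to zero since $p\ge2$.
\item $k_4\to L^{-2}$.
\item $k_1\simeq -\tfrac{d_0}{2}(p-2q)(p-q-1)\delta^{p-2}$. If $p>2$, this tends to zero. If $p=2$, the coefficient $(2-2q)(1-q)d_0/2=d_0(q-1)^2$ is nonzero for $q>1$, so $k_1\to -d_0(q-1)^2$, which is finite and nonpositive.
\end{itemize}
The $p=2$ case is consistent with Theorem~\ref{curvature3}, with $\beta=d_0(q-1)^2$. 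In every case the $k_i$ are bounded, so $\mathcal{K}$, $R$ and $\mathcal{S}$ are bounded.

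The main obstacle is making sure the leading coefficient of $k_1$ is correctly extracted when $p=2$, where the correction terms sit at the same order $\delta^0$ as the leading term. I would verify this by writing $D\ee^{2\psi}=a^2$ and using $k_1=-\ddot a/a$. With $a\sim\sqrt{d_0}A_0\delta^{1-q}$ and $\dot r=-\sqrt{d_0}\,\delta$, this gives $\ddot a/a\to d_0(q-1)^2$ directly, with $o(1)$ errors controlled by $o_{rel,2}$.
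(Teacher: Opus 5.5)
Your proposal is correct and follows essentially the same argument as the paper. You show that $I_\star<\infty$ iff $q>p-1$, and that completeness forces $p\ge 2$: you get this from the spacelike-type Corollary via $\tau_L=\infty$, whereas the paper uses the $\delta^{-\min(q,p/2)}$ comparison for $I_{geo}$ (which you also check). The curvature limits then follow from Eq.~\eqref{examplek1234}, with $\beta=d_0(q-1)^2$ at $p=2$ and $\beta=0$ for $p>2$. One harmless slip: in the interior $a^2=-D\,\mathrm{e}^{2\psi}$, not $D\,\mathrm{e}^{2\psi}$, though your later asymptotics for $a$ use the correct sign.
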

	
	\noindent\textit{Proof.}
	The integral $I_\star$ converges exactly when $q> p-1$.
	Completeness requires $q\geq 1$ and $p\geq 2$, as in the null-type calculation.
	Combining these conditions gives Eq.\ref{qp}.
	Equation \ref{examplek1234} applies and yields
	\begin{equation}
		\beta=
		\begin{cases}
			d_0 (q-1)^2, & p=2,\\
			0, & p>2.
		\end{cases}
	\end{equation}
	The remaining curvature functions have the limits in Eq.~\eqref{sk1234}.
	\hfill$\square$
	
	Every end in Eq.~\eqref{qp} has bounded timelike parallel-propagated curvature.
	For bounded null parallel propagated curvature, Eq.~\eqref{a2Q} and Corollary~\ref{Nppcs} give the additional restriction
	\begin{equation}
		p-1 < q \leq p-\frac{1}{2}.\label{pqp}
	\end{equation}
	Therefore, for timelike geodesic, measured curvature components are bounded.
	
	The spacelike-type example in Ref.~\cite{LiangLi2026AT} has $p=q=2$.
	It lies in the complete, scalar-bounded class and has bounded timelike parallel propagated curvature, but it does not satisfy Eq.~\eqref{pqp}.
	
	\subsection{Cauchy surfaces}
	Along any interior causal curve,
	\begin{equation}
		\left|\frac{\dd t}{\dd \tau}\right|\leq\frac{1}{a}.
	\end{equation} 
	Because $I_\star<\infty$, the coordinate $t$ has a finite limit at a spacelike-type end.
	Together with $\zeta\to\zeta_i<0$, this implies finite limiting values of $T$ and $X$.
	For a causally complete interior, these finite conformal coordinate endpoints nevertheless lie at infinite affine or proper time along causal geodesics.
	
	For Constant-$T$ slices need not be Cauchy surfaces for spacelike-type domain.
	Instead, on $\Omega_i=\{(T,X):X^2-T^2>\zeta_i\}\times S^2$, define
	\begin{equation}
		w(T,X) = \frac{T}{\sqrt{X^2-\zeta_i}},\quad -1<w<1.
	\end{equation} 
	We show that each level $w=w_0\in(-1,1)$ is a Cauchy surface.
	Every future-directed causal curve can be parametrized by $T$, and causality gives
	\begin{equation}
		\left|\frac{\dd X}{\dd T}\right|\leq 1.
	\end{equation}
	Along such a curve,
	\begin{equation}
		\frac{\dd w}{\dd T}=\frac{X^2-\zeta_i-TX\dd X/\dd T}{(X^2-\zeta_i)^{3/2}}>0,
	\end{equation}
	because
	\begin{equation}
		\left|TX\frac{\dd X}{\dd T}\right|\leq |TX|<X^2-\zeta_i.
	\end{equation}
	Thus each level is crossed at most once.
	The gradient of $w$ is timelike by the same inequality, so these levels are spacelike.
	
	Future null infinity $\mathscr{I}^+$ and the future spacelike-type end $\mathscr{B}^+$ have limiting value $w=1$; their past counterparts have $w=-1$.
	The causal diagram is shown in Fig.~\ref{f1}.
	
	\begin{figure}[htbp]      
		\centering             
		\includegraphics[width=0.45\textwidth]{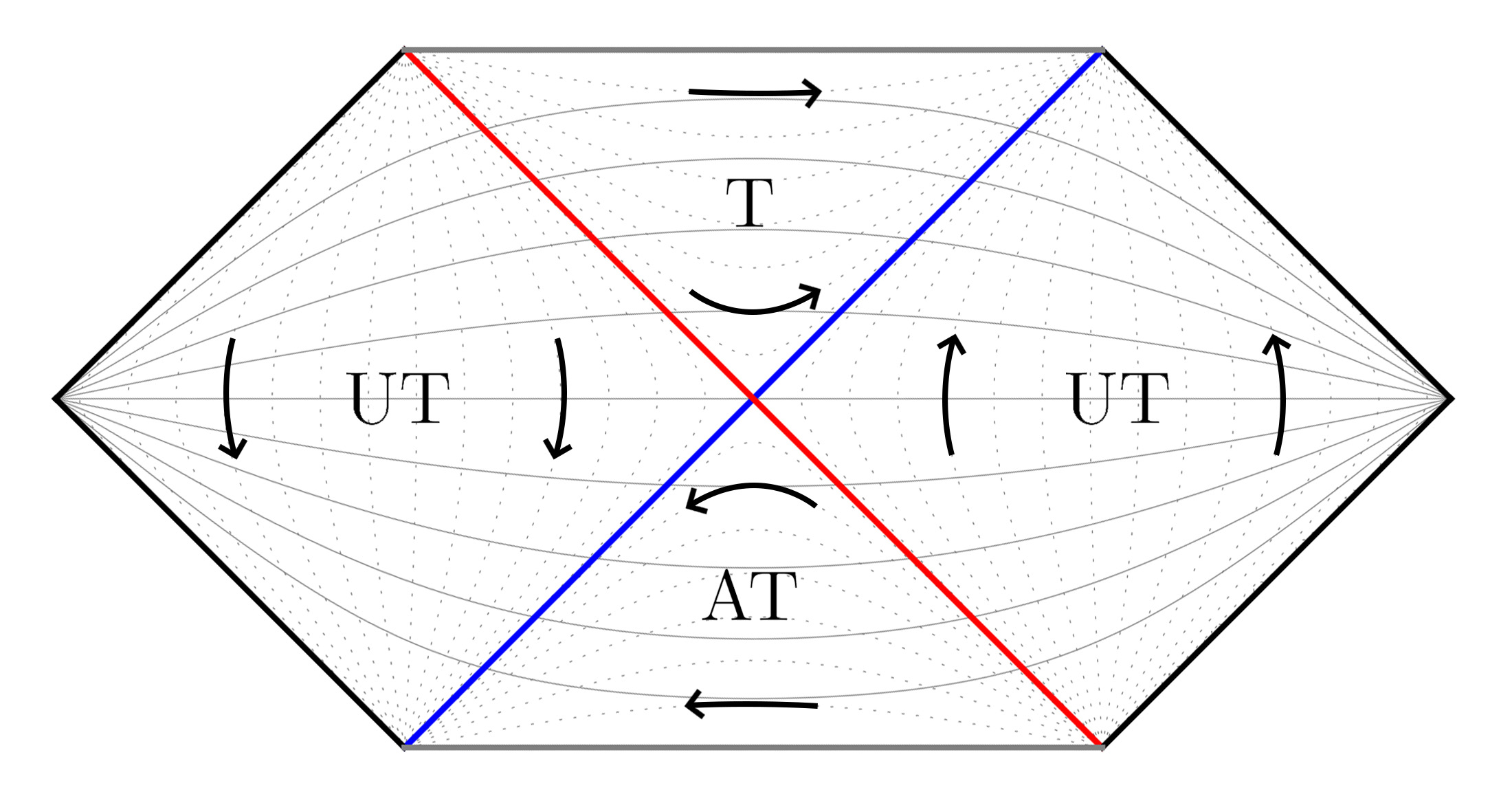}
		\caption{Causal diagram for $\zeta_i\in(-\infty,0)$.
			Regions are trapped ($\mathrm{T}$), anti-trapped ($\mathrm{AT}$), and untrapped ($\mathrm{UT}$). Blue and red lines show the marginally trapping horizons. The four thick black lines represent null infinity, where $r\to\infty$, and the two thick gray lines represent the spacelike-type inner ends, where $r=L$. The interior curves show constant-$r$ surfaces and the Cauchy slices of constant $w$. Arrows indicate the Killing field $\partial_t$.}
		\label{f1}    
	\end{figure}
	
	\section{Energy conditions in the exterior}
	
	\subsection{Asymptotic constraints}
	
	For the exterior analysis and the constructions below, we impose Schwarzschild asymptotics with control of the first two radial derivatives:
	\begin{equation}
		D(r)=1-\frac{2M}{r}+o_2(r^{-1}),\quad \psi(r)=o_2(r^{-1}).
	\end{equation}
	Here and below, the notation $f=o_2(r^{-s})$ means
	\begin{equation}
		f^{(j)}(r)=o(r^{-s-j}),\quad j=0,1,2.
	\end{equation}
	These are explicit boundary conditions for the class studied here.
	They are stronger than merely requiring the metric components to approach their Minkowski values.
	In particular, a small remainder in $D$ need not have small derivatives, so the derivative estimates must be stated separately.
	
	To explain the condition on $\psi$, define $B=D\ee^{2\psi}=-g_{tt}$.
	The Schwarzschild metric has $B=D=1-2M/r$.
	Requiring the same leading mass term in both metric functions gives $\psi=o(r^{-1})$.
	The derivatives satisfy
	\begin{gather}
		B^{\prime}=\mathrm{e}^{2\psi}(D^\prime+2D\psi^\prime),\\
		B^{\prime\prime}=\mathrm{e}^{2\psi}(D^{\prime\prime}+4D^\prime\psi^\prime+2D\psi^{\prime\prime}+4D\psi^{\prime 2}).
	\end{gather}
	Thus, the stated conditions on $D$ and $\psi$ imply
	\begin{equation}
		\begin{gathered}
			B=1-\frac{2M}{r}+o(r^{-1}),\\
			B^\prime=\frac{2M}{r^2}+o(r^{-2}),\quad B^{\prime\prime}=-\frac{4M}{r^3}+o(r^{-3}).
		\end{gathered}
	\end{equation}
	Conversely, if $D$ and $B$ both satisfy these three estimates, then $\psi=\frac{1}{2}\ln(B/D)=o_2(r^{-1})$.
	These conditions control the distant curvature and effective stress tensor.
	They are not needed to derive the exact energy inequalities below.
	
	\subsection{Einstein tensor and energy inequalities}

	We use the rescaled density and pressures, equal to $8\pi$ times the corresponding components of the effective stress tensor.
	Direct calculation gives
		\begin{gather}
			\rho\equiv -G_t{}^{t}=\frac{1-D(r)-rD^\prime}{r^2},\notag\\
			p_r\equiv G_r{}^r=-\rho+\frac{2D\psi^\prime}{r},\\
			p_{\perp}\equiv G_\theta{}^\theta=\frac{D^{\prime\prime}}{2}+\frac{D^\prime}{r}+D(\psi^{\prime\prime}+\psi^{\prime 2})+\frac{3}{2}D^\prime \psi^\prime+\frac{D\psi^\prime}{r}.\notag
		\end{gather}
	The effective stress tensor is diagonal in the static orthonormal frame and is of Hawking–Ellis type I.
	Multiplication by the positiv econstant $8\pi$ does not change any energy inequality.
	The null, weak, dominant, and strong energy conditions are \cite{MartinMorunoVisser2017}
	\begin{align*}
		\mathrm{NEC}&:\quad \rho+p_r\geq 0,\quad \rho+p_\perp\geq0,\\
		\mathrm{WEC}&:\quad \mathrm{NEC},\quad \rho\geq0,\\
		\mathrm{DEC}&:\quad \rho\geq 0,\quad \rho\geq |p_r|,\quad \rho\geq |p_\perp|,\\
		\mathrm{SEC}&:\quad \mathrm{NEC},\quad \rho+p_r+2p_\perp\geq0.
	\end{align*}
	Define
	\begin{equation}
		\Sigma\equiv\rho+p_r=\frac{2D\psi^\prime}{r}.
	\end{equation}
	At any exterior point, all four conditions hold if and only if
	\begin{equation}
		\rho\geq0,\quad 0\leq \Sigma\leq 2\rho,\quad -\frac{\Sigma}{2}\leq p_\perp \leq \rho.
	\end{equation}
	
	\subsection{Controlled exterior geometries}

	To construct geometries that satisfy all four energy conditions at sufficiently large radii, we choose
	\begin{gather}
		D(r)=1-\frac{2M}{r}+\mathcal{P}(r)+o_2(r^{-n-1}),\\ \mathcal{P}(r)=\alpha\left(\frac{M}{r}\right)^{n+1},\quad n>0,\quad \alpha>0,
	\end{gather}
	The condition $n>0$ makes this correction smaller than the Schwarzschild mass term.
	The power-law form gives
	\begin{equation}
		r\mathcal{P}^\prime=-(n+1)\mathcal{P},\quad
		r^2\mathcal{P}^{\prime\prime}=(n+1)(n+2)\mathcal{P},
	\end{equation}
	so all leading stress tensor terms canbe comparedat the same order.
	In particular,
	\begin{equation}
		\rho=\frac{n\mathcal{P}}{r^2}+o(\mathcal{P}/r^2).
	\end{equation}
	Thus, $\alpha>0$ ensures positive density at sufficiently large radii.
	The derivative estimates in $o_2$ ensure that the remainder does not change this leading coefficient.
	This power-law form is a choice for an existence construction, not a consequence of asymptotic flatness.
	
	The radia inequalities also constrain $\psi$.
	On a sufficiently distant interval where $D>0$, the condition $0\leq\Sigma\leq 2\rho$ gives
	\begin{equation}
		0\leq \psi^\prime(r)\leq\frac{r\rho(r)}{D(r)}.
	\end{equation} 
	Together with $\psi(\infty)=0$,  integration yields
	\begin{align}
		0\leq-\psi(r)&=\int_{r}^{\infty}\psi^\prime(s)\dd s\notag\\
		&\leq\frac{n}{n+1}\mathcal{P}(r)[1+o(1)],
	\end{align} 
	For the last estimate, the relative error in the integrand is uniformly small for all $s\ge r$ once $r$ is sufficiently.
	Hence $\psi$ is nonpositive and $\psi=O(\mathcal{P})$.
	This estimate does not by itself give an asymptotic coefficient or control the second derivative.
	
	For the construction, we therefore make the additional choice
	\begin{equation}
		\psi=-c_1 \mathcal{P}+o_2(\mathcal{P}),
	\end{equation}
	where $c_1$ is constant and $o_2(\mathcal{P})$ means $o_2(r^{-n-1})$.
	It follows that
	\begin{gather}
		\psi^\prime=\frac{c_1(n+1)\mathcal{P}}{r}+o(\mathcal{P}/r),\\ \psi^{\prime\prime}=-\frac{c_1(n+1)(n+2)\mathcal{P}}{r^2}+o(\mathcal{P}/r^2).
	\end{gather}
	The pressure expansions are
	\begin{gather}
		p_r=\left[-1+\frac{2c_1(n+1)}{n}\right]\frac{n \mathcal{P}}{r^2}+o(\mathcal{P}/r^2),\\
		p_\perp=\left[1- \frac{2c_1(n+1)}{n}\right]\frac{n (n+1)\mathcal{P}}{2r^2}+o(\mathcal{P}/r^2).
	\end{gather}
	If $c_1=0$, then $p_\perp/\rho\to (n+1)/2$ which violates the dominant energy condition for $n>1$.
	
	Write
	\begin{equation}
		c_1 = \frac{c_2 n}{2(n+1)},\quad \rho_0=\frac{n\mathcal{P}}{r^2}>0.
	\end{equation}
	Then
	\begin{gather}
		p_r=(c_2-1)\rho_0+o(\rho_0),\\
		p_\perp=\frac{ (n+1)}{2}(1-c_2)\rho_0+o(\rho_0).
	\end{gather}
	The six quantities needed for the four energy conditions have limits
	\begin{equation}
		\frac{1}{\rho_0}
		\begin{pmatrix}
			\rho \\
			\rho+p_r \\
			\rho+p_\perp
			\vphantom{\dfrac{n+1}{2}} \\
			\rho-p_r \\
			\rho-p_\perp
			\vphantom{\dfrac{n+1}{2}} \\
			\rho+p_r+2p_\perp
		\end{pmatrix}
		\xrightarrow{r\to\infty}
		\begin{pmatrix}
			1 \\
			c_2 \\
			1+\dfrac{n+1}{2}(1-c_2) \\
			2-c_2 \\
			1-\dfrac{n+1}{2}(1-c_2) \\
			n+1-n c_2
		\end{pmatrix}.
	\end{equation}
	The third entry is half the sumof the fourth and sixth entries.
	Consequently, all six limits are strictly positivewhen
	\begin{equation}
		\max\left(0,\frac{n-1}{n+1}\right)
		< c_2 <
		\min\left(2,\frac{n+1}{n}\right).
		\label{eq:tau-range}
	\end{equation}
	This proves more than positivity of the limiting coefficients.
	Let $m_\star>0$ be the smallest of the six limits.
	There is a finite $R_{EC}>r_h$ such that all six normalized quantities differ from their limits by less than $m_\star/2$ for every $r>R_{EC}$.
	All four energy conditions therefore hold throughout that interval.
	If a limiting coefficient vanishes, its sig nmust instead be determined fromthe subleading terms.
	The choice $c_2=1$ lies strictly inside the allowed interval for every $n>0$.
	
	\section{Analytic Global constructions}
	
	\subsection{Null-type}
	
	We now construct real-analytic metric functions on $L<r<\infty$ that satisfy the interior, horizon, and exterior conditions.
	The horizon conditions are
	\begin{gather}
		D(r)=\left(1-\frac{r_h}{r}\right)\ee^{U(r)},\label{DU}\\
		U(r_h)+\psi(r_h)=\ln(2\kappa r_h)\label{kappa}.
	\end{gather}
	At infinity, we require
	\begin{gather}
		D=1-\frac{2M}{r}+\mathcal{P}+o_2(\mathcal{P}),\quad \mathcal{P}=\alpha\left(\frac{M}{r}\right)^{n+1},\\ \psi=-c_1\mathcal{P}+o_2(\mathcal{P}),\quad c_1=\frac{c_2 n}{2(n+1)},\\
		\max\left(0,\frac{n-1}{n+1}\right)
		< c_2 <
		\min\left(2,\frac{n+1}{n}\right),\label{c2}
	\end{gather}
	where $M>0, n>0$, and $\alpha>0$.
	Near the inner end, we require the controlled power laws
	\begin{gather}
		D= -d_0(r-L)^p[1+o_{rel,2}(1)],\\ \ee^\psi= A_0(r-L)^{-q}[1+o_{rel,2}(1)],\label{IDE}\\
		p\geq 2,\quad 1\leq q\leq p-1,\label{pq}
	\end{gather}
	with $d_0,A_0>0$.
	These conditions give a complete null-type inner end with bounded scalar and parallel propagated curvature.
	The construction below is an explicit family, not a classification of all admissible metrics.
	
	First, the mass parameter $M$ and the horizon radius $r_h$ need not satisfy $r_h=2M$.
	Taking $U(r)=\mathcal{P}(r)$ alone would give
	\begin{equation}
		D=1-\frac{r_h}{r}+O(\mathcal{P}),
	\end{equation}
	and hence an asymptotic mass $r_h/2$.
	To obtain the desired mass, consider the distant expansion
	\begin{equation}
		S(r)=\ln\frac{1-2M/r}{1-r_h/r}=\sum_{j=1}^{\infty} \frac{r^j_h-(2M)^j}{j r^j}.
	\end{equation}
	This series converges for $r>\max (r_h,2M)$.
	When $r_h\neq 2M$, the logarithm cannot serve as a smooth real function on the whole radial domain: it is singular at $r_h$ and may not be real elsewhere.
	We instead use the finite sum
	\begin{equation}
		S_{N_1}(r)=\sum_{j=1}^{{N_1}} \frac{r^j_h-(2M)^{j}}{j r^j},\quad {N_1}>n,
	\end{equation}
	where $N_1$ is an integer.
	Its error is $O_2(r^{-N_1-1})=o_2(\mathcal{P})$, where $O_2$ denotes the corresponding big-$O$ derivative estimates through order two.
	Thus this truncation changes neither the mass term nor the leading correction $\mathcal{P}$.
	
	Second, the inner power laws requirea logarithmic term.
	The function
	\begin{equation}
		H(r)=\ln\frac{r}{r-L}=\sum_{j=1}^{\infty}\frac{L^j}{j r^j},\quad r>L,
	\end{equation}
	has the required logarithmic growth at $L$, but also has a $1/r$ term at infinity.
	Remove enough terms to preserve the chosen exterior expansion:
	\begin{align}
		H_{N_2}&=\ln\frac{r}{r-L}-\sum_{j=1}^{N_2}\frac{L^j}{jr^j}\notag\\&=\sum_{j=N_2+1}^{\infty}\frac{L^j}{jr^j},\quad N_2>n,
	\end{align}
	where $N_2$ is an integer.
	This function is positive and real analytic for $r>L$.
	It obeys $H_{N_2}=o_2(\mathcal{P})$ at infinity and, with $\delta=r-L$,
	\begin{equation}
		H_{N_2}=\ln\frac{L}{\delta}-\sum_{j=1}^{N_2}\frac{1}{j}+O(\delta)\quad (\delta\to0^+).
	\end{equation}
	The remainder is analytic in $\delta$ near zero.
	Adding $-pH_{N_2}$ to $U$ and $q H_{N_2}$ to $\psi$ therefore produces the required controlled inner powers without changing the leading exterior coefficients.

	Finally, impose the horizon normalization without changing either asymptotic behavior.
	Choose an integer $m>n+1$ and define
	\begin{equation}
		F_h(r)=\left(\frac{r_h}{r}\right)^m.
	\end{equation}
	Then $F_h(r_h) = 1$, $F_h = o_2(\mathcal{P})$ at infinity, and $F_h$ is analytic and finite at $L$.
	Set
	\begin{align}
		c_h=&\ln( 2 r_h \kappa ) -S_{N_1}(r_h)-(1-c_1)\mathcal{P}(r_h)\notag\\
		&-(q-p)H_{N_2}(r_h).
	\end{align}
	For any fixed real splitting parameter $\vartheta$, define
	\begin{gather}
		U=S_{N_1}+\mathcal{P}-pH_{N_2}+(1-\vartheta) c_h F_h,\\
		\psi=-c_1 \mathcal{P}+q H_{N_2}+\vartheta c_h F_h.
	\end{gather}
	The sum $U(r_h)+\psi(r_h)$ is exactly $\ln[2r_h\kappa]$, independently of $\vartheta$.
	Thus the inherited surface gravity is preserved.
	
	These formulas satisfy all the stated conditions.
	The exponential factor is positive, so $D$ has exactly one zero, at $r_h$, with the required sign on either side,
	At infinity, $S_{N_1}-S$, $H_{N_2}$, and $F_H$ are all $o_2(\mathcal{P})$, while $\mathcal{P}/r$ and $\mathcal{P}^2$ are also subleading.
	Hence $D$ and $\psi$ have the required exterior expansions.
	Near $L$, all terms except the logarithm are analytic and finite.
	Therefore the inner expansions have positive amplitudes and relative errors $O(\delta)$ with the required derivative control.
	The previous interior criteria and exterior inequalities then apply.
	
	It is sometimes possible to remove the adjustment by fixing $\alpha$ so that $c_h=0$:
	\begin{equation}
		\alpha=\frac{\ln(2r_h\kappa )-S_{N_1}(r_h)-(q-p)H_{N_2}(r_h)}{(1-c_1)(M/r_h)^{n+1}}.
	\end{equation}
	This simplification is allowed only when the right-hand side is positive.
	In particular, if $r_h=2M$ and $\kappa=1/4M$, then $S_{N_1}=0$ and
	\begin{gather}
		\alpha=\frac{2^{n+1}(p-q)H_{N_2}(2M)}{1-c_1}>0.
	\end{gather}
	Here positivity follows from $p>q$, $H_{N_2}(2M)>0$ and $1-c_1>0$.
	The resulting metric functions are
	\begin{gather}
		D=\left(1-\frac{2M}{r}\right)\ee^{\mathcal{P}- p H_{N_2}},\\ \psi=-c_1 \mathcal{P}+ q H_{N_2}.
	\end{gather}
	
	\subsection{Spacelike-type}
	The same analytic construction gives a complete spacelike-type inner end if Eq.~\eqref{pq} is replaced by $p\geq 2,\quad q > p-1$.
	Throughout this range, the independent curvature components and the curvature in every fixed timelike parallel propagated frame are bounded.
	Bounded curvature in every fixed null parallel propagated frame requires the additional restriction $q \leq p-1/2$.
	The exterior construction and its energy inequalities are unchanged.
	
	\section{Smooth but nonanalytic Global constructions}
	
	\subsection{Null-type}
	A smooth cutoff allows exact agreement with Schwarzschild beyond a finite radius while preserving a complete null-type inner end.
	We restrict this construction to $r_h=2M$ and $\kappa=1/(4M)$, as in the examples considered here, and require
	\begin{equation}
		D=1-\frac{2M}{r},\quad \psi=0,\quad r\geq r_+.
	\end{equation}
	Choose any $L<r_-<r_+$.
	The transition interval may lie inside the horizon, outside it, or across it.
	Define
	\begin{gather}
		\eta(x)=
		\begin{cases}
			0, & x\leq 0,\\
			\ee^{-1/x}, & x>0,
		\end{cases}
		\qquad
		x(r)=\frac{r-r_-}{r_+-r_-}.\\
		\chi(r)=\frac{\eta(1-x(r))}{\eta(1-x(r))+\eta(x(r))}.
	\end{gather}
	Then
	\begin{gather}
		\chi=1\ (r\leq r_-),\quad \chi=0\ (r\geq r_+),\\ 0<\chi<1 \ (r_-<r<r_+)
	\end{gather}
	The function $\chi$ is smooth, and every positive-order derivative vanishes at both matching radii.
	Set
	\begin{gather}
		D=\left(1-\frac{2M}{r}\right)\ee^{\chi [-pH_{N_2}+(1-\vartheta)c_h]},\\ \psi=\chi (q H_{N_2}+\vartheta c_h).
	\end{gather}
	The horizon condition becomes
	\begin{equation}
		\chi(2M)[(q-p)H_{N_2}(2M)+c_h]=0.
	\end{equation}
	It is satisfied by
	\begin{equation}
		c_h=\begin{cases}
			0, & r_+\leq 2M,\\
			(p-q)H_{N_2}(2M), & r_+>2M.
		\end{cases}
	\end{equation}
	When $r_+\leq 2M$, the cutoff already vanishes at the horizon and any finite ch would work; zero is the simplest choice.
	When $r_+> 2M$, the second
	choice applies whether the horizon is below or inside the transition interval.
	Thus no additional restriction on the position of $r_-$ relative to $2M$ is needed.

	For $p\geq 2$ and $1\leq q\leq p-1$, the region $L<r<r_-$ has the required controlled null-type power laws.
	The exponential factor remains positive everywhere, so the only horizon is at $2M$.
	Smoothness of the cutoff ensures smooth matching at both radii and regularity through out the finite transition region.
	In the portion of the exact Schwarzschild region lying outside the horizon, the effective stress tensor vanishes and all four energy conditions hold.
	The construction does not impose energy conditions in the transition region.
	If $r_+<2M$, exact
	Schwarzschild agreement also includes an interior segment.
	
	These metrics are smooth but cannot be real analytic on the entire connected interval $(L,\infty)$.
	Indeed, if $D$ and $\psi$ were real analytic and agreed
	with the Schwarzschild values on an open interval, the identity theorem would extend that agreement to the whole radial domain.
	This would remove the complete asymptotic throat.
	The distinction between analytic and smooth constructions is therefore essential when exact Schwarzschild matching is required.
		 
	\subsection{Spacelike-type}
	The same cutoff formulas give a complete spacelike-type inner end for $p\geq 2, q > p-1$.
	Scalar curvature and timelike parallel propagated curvature are bounded throughout this range.
	For bounded null parallel propagated curvature as well, choose additional restriction $ q \leq p-1/2$.
	The horizon normalization, smooth matching, and exact Schwarzschild region are unchanged.
	
	\section{Discussion and conclusions}
	
	We have studied black holes
	with asymptotic throats while retaining two independent radial metric functions.
	The analysis distinguishes
	causal geodesic completeness, bounded scalar curvature invariants, and bounded curvature in parallel propagated frames.
	It also distinguishes the causal character of the inner end in a conformal diagram from the proper time or affine parameter needed to approach it.
	
	The interior completeness criterion applies to both null-type and spacelike-type ends.
	Under the stated curvature bounds and finite-limit assumptions, complete null-type ends have limiting curvature determined by the throat radius.
	Spacelike-type ends, admit an additional nonnegative curvature parameter.
	One value reproduces the scalar invariants of Nariai spacetime, but equality of these limits alone does not establish convergence of the metric to that solution.
	These results do not require a power-law ansatz.
	They do require the limit assumptions used in the proofs; boundedness alone does not imply convergence.

	Parallel propagated curvature imposes further restrictions.
	In the controlled power-law class, all complete null-type ends considered here have bounded curvature along both timelike and null geodesics.
	For complete spacelike-type ends, bounded timelike curvature follows from the common curvature bounds, whereas bounded null curvature requires an additional
	restriction on the exponents.
	The spacelike example with equal exponents illustrates the difference: its scalar invariants and timelike-frame curvature remain bounded, while a radial null-frame component diverges.
	This divergence occurs at infinite affine parameter and does not imply geodesic incompleteness.
	
	The interior and exterior energy conditions have different outcomes.
	Monotone contraction toward a positive limiting radius, together with radial null completeness, forces violations of the radial null energy condition arbitrarily close to the inner end.
	All four standard pointwise energy conditions can nevertheless hold throughout a sufficiently distant exterior region.
	The real-analytic constructions show that these exterior conditions are compatible with complete throats of both causal types.
	The smooth constructions can additionally provide an exactly Schwarzschild exterior.
	Smoothness guarantees regular matching, but it does not guarantee the energy conditions in the transition region.
	
	The results establish geometric existence and precise criteria within the class studied here.
	They do not classify all possible matter sources or all nonsingular black holes.
	The throat radius is a geometric scale; relating it to a microscopic length requires additional physical input.
	Matter realizations, dynamical for
	mation, perturbative stability, and bounds on curvature derivatives remain separate questions.
	The criteria developed here provide a basis for addressing them while keeping completeness and the different notions of curvature regularity distinct.
	
	\begin{acknowledgments}
		The authors acknowledge the use of GPT-6 Astra (OpenAI) to assist with mathematical proofs and checks of derivations under their direction.
		The authors take full responsibility for the arguments, results, and conclusions presented in this work.
	\end{acknowledgments}
	
	\bibliography{ATNC-2}

@misc{LiangLi2026AT,
	author        = {Liang, Yi-Bo and Li, Hong-Rong},
	title         = {Asymptotic Throat: The Geometric Inevitability
	of Regular Black Holes},
	year          = {2026},
	eprint        = {2604.11396},
	archivePrefix = {arXiv},
	primaryClass  = {gr-qc},
	doi           = {10.48550/arXiv.2604.11396}
}

@article{Schwarzschild1916,
	author  = {Schwarzschild, Karl},
	title   = {{\"U}ber das Gravitationsfeld eines Massenpunktes
	nach der Einsteinschen Theorie},
	journal = {Sitzungsber. Preuss. Akad. Wiss. Berlin
	(Math. Phys.)},
	pages   = {189--196},
	year    = {1916}
}

@article{penrose1965gravitational,
	title={Gravitational collapse and space-time singularities},
	author={Penrose, Roger},
	journal={Physical Review Letters},
	volume={14},
	number={3},
	pages={57},
	year={1965},
	publisher={APS}
}

@article{hawking1967occurrence,
	title={The occurrence of singularities in cosmology. III. Causality and singularities},
	author={Hawking, Stephen William},
	journal={Proceedings of the Royal Society of London. Series A. Mathematical and Physical Sciences},
	volume={300},
	number={1461},
	pages={187--201},
	year={1967},
	publisher={The Royal Society London}
}

@article{hawking1970singularities,
	title={The singularities of gravitational collapse and cosmology},
	author={Hawking, Stephen William and Penrose, Roger},
	journal={Proceedings of the Royal Society of London. A. Mathematical and Physical Sciences},
	volume={314},
	number={1519},
	pages={529--548},
	year={1970},
	publisher={The Royal Society London}
}

@inproceedings{bardeen1968non,
	title={Non-singular general relativistic gravitational collapse},
	author={Bardeen, James},
	booktitle={Proceedings of the 5th International Conference on Gravitation and the Theory of Relativity},
	pages={87},
	year={1968}
}

@article{ayon1998regular,
	title={Regular black hole in general relativity coupled to nonlinear electrodynamics},
	author={Ayon-Beato, Eloy and Garcia, Alberto},
	journal={Physical review letters},
	volume={80},
	number={23},
	pages={5056},
	year={1998},
	publisher={APS}
}

@article{ayon2000bardeen,
	title={The Bardeen model as a nonlinear magnetic monopole},
	author={Ayon-Beato, Eloy and Garc{\i}a, Alberto},
	journal={Physics Letters B},
	volume={493},
	number={1-2},
	pages={149--152},
	year={2000},
	publisher={Elsevier}
}

@article{bronnikov2001regular,
	title={Regular magnetic black holes and monopoles from nonlinear electrodynamics},
	author={Bronnikov, Kirill A},
	journal={Physical Review D},
	volume={63},
	number={4},
	pages={044005},
	year={2001},
	publisher={APS}
}

@article{burinskii2002new,
	title={New type of regular black holes and particlelike solutions from nonlinear electrodynamics},
	author={Burinskii, Alexander and Hildebrandt, Sergi R},
	journal={Physical Review D},
	volume={65},
	number={10},
	pages={104017},
	year={2002},
	publisher={APS}
}

@article{fan2016construction,
	title={Construction of regular black holes in general relativity},
	author={Fan, Zhong-Ying and Wang, Xiaobao},
	journal={Physical Review D},
	volume={94},
	number={12},
	pages={124027},
	year={2016},
	publisher={APS}
}

@incollection{bronnikov2023regular,
	title={Regular black holes sourced by nonlinear electrodynamics},
	author={Bronnikov, Kirill A},
	booktitle={Regular Black Holes: Towards a New Paradigm of Gravitational Collapse},
	pages={37--67},
	year={2023},
	publisher={Springer}
}

@article{simpson2019black,
	title={Black-bounce to traversable wormhole},
	author={Simpson, Alex and Visser, Matt},
	journal={Journal of Cosmology and Astroparticle Physics},
	volume={2019},
	number={02},
	pages={042},
	year={2019},
	publisher={IOP Publishing}
}

@article{lobo2021novel,
	title={Novel black-bounce spacetimes: wormholes, regularity, energy conditions, and causal structure},
	author={Lobo, Francisco SN and Rodrigues, Manuel E and Silva, Marcos V de S and Simpson, Alex and Visser, Matt},
	journal={Physical Review D},
	volume={103},
	number={8},
	pages={084052},
	year={2021},
	publisher={APS}
}

@article{bronnikov2022black,
	title={Black bounces, wormholes, and partly phantom scalar fields},
	author={Bronnikov, KA},
	journal={Physical Review D},
	volume={106},
	number={6},
	pages={064029},
	year={2022},
	publisher={APS}
}

@article{bronnikov2022field,
	title={Field sources for Simpson-Visser spacetimes},
	author={Bronnikov, Kirill A and Walia, Rahul Kumar},
	journal={Physical Review D},
	volume={105},
	number={4},
	pages={044039},
	year={2022},
	publisher={APS}
}

@article{canate2022black,
	title={Black bounces as magnetically charged phantom regular black holes in Einstein-nonlinear electrodynamics gravity coupled to a self-interacting scalar field},
	author={Ca{\~n}ate, Pedro},
	journal={Physical Review D},
	volume={106},
	number={2},
	pages={024031},
	year={2022},
	publisher={APS}
}

@article{Zaslavskii2007,
	author        = {Zaslavskii, O. B.},
	title         = {Truly naked spherically symmetric and distorted black holes},
	journal       = {Phys. Rev. D},
	volume        = {76},
	pages         = {024015},
	year          = {2007},
	doi           = {10.1103/PhysRevD.76.024015},
	eprint        = {0706.2727},
	archivePrefix = {arXiv},
	primaryClass  = {gr-qc}
}

@article{Bronnikov2008,
	author        = {Bronnikov, K. A. and Elizalde, E.
	and Odintsov, S. D. and Zaslavskii, O. B.},
	title         = {Horizons versus singularities in spherically symmetric space-times},
	journal       = {Phys. Rev. D},
	volume        = {78},
	pages         = {064049},
	year          = {2008},
	doi           = {10.1103/PhysRevD.78.064049},
	eprint        = {0805.1095},
	archivePrefix = {arXiv},
	primaryClass  = {gr-qc}
}

@book{Chandrasekhar1992,
	author    = {Chandrasekhar, S.},
	title     = {The Mathematical Theory of Black Holes},
	publisher = {Oxford University Press},
	address   = {Oxford},
	year      = {1992}
}

@article{Nariai1999,
	author  = {Nariai, Hidekazu},
	title   = {On a New Cosmological Solution of {Einstein}'s
	Field Equations of Gravitation},
	journal = {Gen. Relativ. Gravit.},
	volume  = {31},
	pages   = {963--971},
	year    = {1999},
	doi     = {10.1023/A:1026602724948},
	note    = {Reprint of the 1951 paper}
}

@incollection{MartinMorunoVisser2017,
	author        = {Mart{\'i}n-Moruno, Prado and Visser, Matt},
	title         = {Classical and Semi-classical Energy Conditions},
	booktitle     = {Wormholes, Warp Drives and Energy Conditions},
	editor        = {Lobo, Francisco S. N.},
	series        = {Fundamental Theories of Physics},
	volume        = {189},
	publisher     = {Springer},
	address       = {Cham},
	pages         = {193--213},
	year          = {2017},
	doi           = {10.1007/978-3-319-55182-1_9},
	eprint        = {1702.05915},
	archivePrefix = {arXiv},
	primaryClass  = {gr-qc}
}

\end{document}